\newcommand{\blue}{\color{black}}
\newcommand{\black}{\color{black}}
\newcommand{\red}{\color{black}}
\begin{document}

\title{Tradeoff Analysis of Delay-Power-CSIT Quality of Dynamic BackPressure Algorithm for Energy Efficient OFDM Systems}

\author{\authorblockN{Vincent K. N. Lau and Chung Ha Koh}\\\authorblockA{Dept. of Electrical and Computer Engineering
\\Hong Kong University of Science and Technology, Hong Kong\\Email:\{eeknlau, eechungha\}@ust.hk}}

\maketitle
\begin{abstract}
\footnotetext{This work has been supported by Huawei Technologies.}
In this paper, we analyze the fundamental power-delay tradeoff in point-to-point OFDM systems under imperfect channel state information quality and non-ideal circuit power. We consider the dynamic backpressure (DBP) algorithm, where the transmitter determines the rate and power control actions based on the instantaneous channel state information (CSIT) and the queue state information (QSI). We exploit a general fluid queue dynamics using a continuous time dynamic equation. Using the sample-path approach and renewal theory, we decompose the average delay in terms of multiple {\em unfinished works} along a sample path, and derive an upper bound on the average delay under the DBP power control, which is asymptotically accurate at small delay regime. We show that despite imperfect CSIT quality and non-ideal circuit power, the average power $(P)$ of the DBP policy scales with delay $(D)$ as $P=\mathcal O(D \exp(1/D))$ at small delay regime. {\red While the impacts of CSIT quality and circuit power appears as the coefficients of the scaling law, they may be significant in some operating regimes. }
\end{abstract}

\newtheorem{Result}{Result}
\newtheorem{Problem}{Problem}
\newtheorem{Definition}{Definition}
\newtheorem{Lemma}{Lemma}
\newtheorem{Assumption}{Assumption}
\newtheorem{Remark}{Remark}
\newtheorem{Theorem}{Theorem}
\newtheorem{Corollary}{Corollary}


\section{Introduction}

There is a growing awareness of energy efficiency of the wireless infrastructure. In \cite{meshkati_energy-efficient_2007, mingbo_xiao_utility-based_2003}, the authors considered adaptive power control to optimize an energy efficiency metric, namely the Joule per bit. This metric measures the average cost (i.e. energy) to the generated utility (i.e. information bits) and allows for accessing the energy efficiency at full loads. However, this metric does not incorporate the delay aspect. Furthermore, in all these works, perfect channel state information (CSIT) is assumed and the issues of packet errors are ignored.
There have been several works that consider robust power control by taking into account of the imperfect CSIT. In \cite{rey_robust_2005, yingwei_yao_rate-maximizing_2005}, the authors derived a power adaptation algorithm to maximize the system goodput accounting for potential packet errors due to imperfect CSIT.
There are also some works that consider MIMO design with imperfect CSIT \cite{rey_robust_2005, yoo_capacity_2006} or limited feedback \cite{love_overview_2008}. While these works have dealt with the impact of imperfect CSIT, they have ignored the burstiness as well as the delay performance of the source data.

In general, it is quite challenging to consider delay optimization in wireless systems because that involves a joint consideration of both the information theory (to model the PHY dynamics) and the queueing theory (to model the delay dynamics).
One general approach to delay optimal control is to use Markov Decision Process (MDP) and the optimal control is given by the Bellman equation \cite{bertsekas_dynamic_1987}. However, it is well known that there is no simple solution to the Bellman equation and one cannot obtain viable solutions using brute force value iteration or policy iteration.
Another approach is to adopt Lyapunov theory to derive the {\em throughput optimal} control policy\footnote{Throughput-optimal policies  are  the  set of policies which  can  stabilize the queues in the system if the arrival rate vector is inside the \emph{stability region} \cite{tassiulas_stability_1992}.} \cite{georgiadis_resource_2006}.
While the control policy derived is adaptive to both the CSIT and QSI, the delay performance of such schemes are not fully understood. In {\blue\cite{neely_energy_2006},} the authors derived the {\em dynamic backpressure} (DBP) control
and the well-known asymptotic power-delay tradeoff of $\mathcal{O}(1/V)$, $\mathcal{O}(V)$ {\red as $V\to\infty$, i.e., at large delay regime\footnote{\red $V$ is a
constant which determines the tradeoff between power consumption and delay performance. We use the following notation in \cite{berry_optimal_2006} to characterize the asymptotic behavior a function $g(x)$ as $x\to x^*$: $g(x)=\mathcal{O}(f(x))$, if ${\lim\sup}_{x\to x^*}\frac{g(x)}{f(x)}<\infty$; $g(x)=\Omega(f(x))$, if ${\lim\sup}_{x\to x^*}\frac{f(x)}{g(x)}<\infty$; $g(x)=\Theta(f(x))$, if $g(x)=\mathcal{O}(f(x))$, and $g(x)=\Omega(f(x))$.}.} Furthermore, all these existing works have assumed perfect CSIT and it is not clear how the CSIT quality will affect the underlying delay power tradeoff.

In this paper, we are interested to study the inter-relationship between energy efficiency, CSIT quality and delay performance of the DBP algorithm for OFDM systems. The following are some first order technical challenges that have to be overcome:
\begin{itemize}

\item \textbf{Coupling between the Control Policy and the Queue Dynamics:} The DBP algorithm is adaptive to both the QSI and CSIT and hence, this introduces coupling to the queue dynamics and the control actions at each frame. This coupling makes the delay analysis of the system extremely difficult because there is no closed-form expression for the steady state distribution of the queue length for such dynamic policy. While there are some works in the literature that analyze the delay-power tradeoff for DBP algorithms using Lyapunov bounds \cite{neely_energy_2006}, such analyses are focused on the first order analysis. However, such Lyapunov bounding techniques are usually very loose and cannot be used to study {\red the second order impacts of system parameters} such as how the CSIT quality affects the delay performance.

\item \textbf{Coupling of Imperfect CSIT and Energy Efficiency:} For a given delay requirement, we should reduce the transmission data rate so as to increase the transmission time \cite{qiao_miser:_2003, zafer_minimum_2009} for energy efficient communications\footnote{This is because the transmission power increases exponentially with data rate.}. However, {\red the underlying tradeoff changes} when circuit power and imperfect CSIT are taken into account. By circuit power, we refer to the power consumption in the RF transmission chain, which can be assumed to be constant irrespective of the transmission rate. Thus, it is not always energy efficient to lengthen the transmission time up to the delay limit because the energy expenditures resulting from circuit power are proportional to the transmission time. On the other hand, due to imperfect CSIT, the transmission power required to support a certain goodput depends on the CSIT quality. Yet, quantifying this relationship for OFDM systems with joint encoding across subcarriers is not trivial because that involves finding the CDF of the mutual information.

\item \textbf{Performance in Small Delay Regime:} Another potential limitation of the Lyapunov bounding technique is that it works for asymptotically large delay regime, which is usually not the regime we are interested in. There are not many works on the analysis of small delay regions. In \cite{berry_optimal_2006}, the authors studied the delay-power tradeoff of a bounded rate scheme in fading channels at small delay regime under perfect CSIT. However, the delay performance of DBP algorithm in OFDM systems at small delay regime as well as the impact of CSIT quality are still not well-understood.
\end{itemize}

In this paper, we overcome the above challenges using continuous time stochastic calculus \cite{kushner_numerical_2001, bertsekas_dynamic_2007}. For instance, we introduce a {\em virtual continuous time system} (VCTS) and model the fluid queue dynamics using a continuous time dynamic equation \cite{zafer_calculus_2005, zafer_minimum_2009}. Using a calculus approach, we first derive closed-form expressions for the {\em unfinished work} in between arrivals. To analyze the end-to-end average delay performance, we adopt and extend the sample-path approach \cite{bertsekas_dynamic_2007} and decompose the average delay in terms of multiple {\em \red unfinished works} along a sample path using renewal reward theory. As such, we obtained an asymptotically accurate upper bound on the average delay under the DBP power control at small delay regime. We show that despite imperfect CSIT quality and non-ideal circuit power, the average power $(P)$ of the DBP policy scales with delay $(D)$ as $P=\mathcal O(D \exp(1/D))$ at small delay regime. {\red While the impacts of CSIT quality and circuit power appears as the coefficients of the scaling law, they may be significant in some operating regimes.}
\begin{table}[b!]
\renewcommand{\arraystretch}{0.6}
\caption{Notation used throughout the paper}
\label{tab_not}
\centering
\begin{tabular}{|c|l|} \hline
{Symbol}& {Meaning}   \\

\hline
$\Delta t$  & time duration of a scheduling slot\\
$T$  & time duration of a arrival period\\
$\overline B$  & the average of arriving packet size\\
$U(k)$   & total number of remaining bits at $k$-th scheduling slot in a queue\\
$\widetilde U(t)$   & fictitious queue state at continuous time $t$\\
$\varepsilon$  & target packet error rate \\
$n_F$  & the number of subcarriers\\
$\sigma^2_e$  & the variance of the CSIT error \\
$\overline {{j^U}}$   & per-period average unfinished work   \\
 $\overline {{j^g}}$  & per-period average energy consumption  \\
 $\overline {{J^U}}$  & continuous time per-period average unfinished work \\
 $\overline {{J^g}}$  & continuous time per-period average energy consumption \\
 $\bar d(\Omega)$  & average (end-to-end) delay of a policy $\Omega$\\
 $\bar g(\Omega)$  & average power consumption of a policy $\Omega$ \\
\hline
\end{tabular}
\end{table}

\section{System Model}

In this section, we shall elaborate the system model of the OFDM link, including the physical layer model, the bursty source model, the queueing dynamics and the power consumption model.

\subsection{Frequency Selective Fading Channel Model and the Imperfect CSIT Model}

We consider a point-to-point OFDM system with $n_F$ subcarriers. The number of resolvable paths in the frequency selective channel is given by $N_d = \left\lfloor {\frac{W}{{\Delta {f_c}}}} \right\rfloor$, where $W$ is the signal bandwidth and $\Delta {f_c}$ is the coherence bandwidth. The channel impulse response can be described by:
\[h(t,\upsilon ) = \sum\nolimits_{l = 1}^{{N_d}} {{h_l}(t)\,} \delta \left( {\upsilon  - \frac{l}{W}} \right)\]
where $\frac{l}{W}$ is the time delay of the $l$-th path and $h_l(t)$ is the corresponding circularly symmetric complex Gaussian (CSCG) random fading coefficients with zero mean and {\red variance $\sigma_l^2$ ($\sigma_l^2$ defines the power-delay profile).} Using $n_F$-point IFFT and FFT in the OFDM system, the received signal in the frequency domain is given by:
\[{Z_n} = \,\,{H_n} \cdot {S_n}\,\, + \,\,{w_n}, \]
where $S_n$ and $Z_n$ are the transmit and receive signals, respectively, of the $n$-th subcarrier and $w_n$ is the i.i.d. complex Gaussian noise with zero mean and normalized variance $1/n_F$ (so that the total noise power across the $n_F$ subcarriers is unity).
Note that $H_n\,\, = \,\,\sum\nolimits_{l = 1}^{N_d} {{h_l}\,{e^{\frac{{ - j2\pi ln}}{{{n_F}}}}}\,}$ for all $n$, which is the FFT of the time-domain channel fading coefficients $\{h_0, \cdots, h_{N_d-1}\}$.


For simplicity, we consider a TDD system and the transmitter obtains an estimate of the CSIT based on the reciprocal reverse channel \cite{marzetta_fast_2006}. However, due to the channel estimation noise as well as the TDD duplexing delay, the estimated CSIT may be outdated. Assume that the CSIT is estimated using MMSE prediction in the time domain, the CSIT model in the time domain is given by:
\[{\widehat h_l}\,\, = \,\,{h_l}\,\,\, + \,\,\Delta {h_l},\,\,\,\,\,{\red \Delta {h_l} \sim \mathcal{CN}\big(0,\sigma_{h,l}^2\big),\,\,\,l\,\, \in \,\{ 0,1,\,\,...\,,N_d - 1\}} .\]
{\red  where $\sigma_{h,l}^2=1-\frac{E_p\sigma_l^2}{E_p\sigma_l^2+1}J_0(2\pi f_D\tau)$, $E_p$ is the pilot SNR, $J_0$ is a Bessel function of the first kind of order 0, $f_D$ is the Doppler shift, and $\tau$ is the duplexing delay\cite{csi_error_it}},{\blue \cite{Ramyadelayfb2009}}. Thus, the estimated CSIT in the frequency domain $\widehat H_n$ after $n_F$-point FFT of $\{ {\widehat h_1}\,\,,...,{\widehat h_{N_d - 1}}\,\}$ is as follows:
\begin{equation}\label{eq_H}
{\widehat H_n}\,\, = \,\,{H_n}\,\,\, + \,\,\Delta {H_n}
\end{equation}
where $H_n$ is the actual channel state information (CSI) of the $n$-th subcarrier and $\Delta {H_n}$ represents the CSIT error. \red The CSIT errors $\Delta {H_n}$ is CSCG with zero mean and variance $\sigma_n^2=\sum\nolimits_{l=1}^{N_d}\sigma_{h,l}^2$, and the correlation of the CSIT error between the $n_1$-th and $n_2$-th subcarriers is given by:
$\mathbb{E}[\Delta {H_{{n_1}}}\Delta {H_{{n_2}}}^H] = \sum\nolimits_{l=1}^{N_d}\sigma_{h,l}^2e^{\frac{ - j2\pi l(n_1-n_2)}{n_F}}$.

\black
We model the packet error solely by the probability that the scheduled data rate exceeds the instantaneous mutual information. Note that the packet errors due to imperfect CSIT is systematic and cannot be eliminated by simply using strong channel coding. Therefore we shall exploit diversity to protect the information from channel outage to enhance the chance of successful delivery to the receiver. Specially, the encoded symbols are transmitted over the frequency domain via a random frequency interleaver. The conditional packet error probability (PER) $\varepsilon$ (conditioned on the CSIT ${\widehat{\bf{H}}}$ and the QSI $U$) of a transmission with data rate $r$ (nat/sec) is given by: $\,\Pr \left[ {r\,\, > \,\sum\nolimits_{n = 0}^{{n_F} - 1} {\log \left( {1 + \frac{{{P_{tx}}{{\left| {{H_n}} \right|}^2}}}{{{n_F}}}} \right)} \left| {\widehat{\bf{H}}} \right.\,,\,\,U} \right]$ where $P_{tx}$ is the transmit power. $\widehat{\bf{H}}\,\, \buildrel \Delta \over = \,\,{\left( {{{\widehat H}_0}\,,\,...\,,{{\widehat{H}}_{{n_F} - 1}}} \right)^T}$ indicates the estimated CSIT which is described in (\ref{eq_H}) and $U$ denotes the queue length (in number of bits) [$U$ is defined in Section II.B].

{\red
\begin{Remark}[Power Allocation over Subcarriers]
In this paper, we have assumed uniform power allocation over subcarriers for the following reasons. First, there is no known closed form expression conditional cdf of mutual information $\sum\nolimits_{n = 0}^{{n_F} - 1} {\log \left( {1 + \frac{{{P_{tx}}{{\left| {{H_n}} \right|}^2}}}{{{n_F}}}} \right)}$. Second, for moderate CSIT quality, the performance bottleneck is the packet errors (due to the residual uncertainty of mutual information given inaccurate CSIT). As a result the first order factor is diversity. Based on uniform power allocation, we have shown (Lemma \ref{lem_csit}) that full diversity order can be captured. Third, we have compared the performance of optimized power allocation  {\blue (obtained by stochastic gradient method \cite{BertsekasNeuro:1996,Borkarbook:2008})} versus uniform power allocation in Fig. \ref{fig_gradient} and it is shown that power optimization over subcarriers only shows marginal gain at moderate CSIT quality. ~\hfill\IEEEQED
\end{Remark}}

%

\subsection{Bursty Source Model, Queue Dynamics and Dynamic Rate Control Policy}
In this paper, we consider a bursty bit flow model. The time dimension is partitioned into scheduling slots of duration $\triangle t$ and indexed by $k$. Multiple slots are grouped as a frame of duration $T$ and indexed by $m$ as illustrated in Fig. \ref{fig_slotmodel}. We have the following assumptions regarding the CSI and the bursty source model.

\begin{Assumption}[Quasi-Static CSI]
 For notation convenience, we denote $\widehat{\bf{H}}(k)\,\, \buildrel \Delta \over = \,\,{\big( {{{\widehat H}_0}(k)\,,\,...\,,{{\widehat{H}}_{{n_F} - 1}}(k)} \big)^T}$ and ${\bf{H}}(k)\,\, \buildrel \Delta \over = \,{\left( {{H_0}(k)\,,\,...\,,{H_{{n_F} - 1}}(k)} \right)^T}\,$ as the ${n_F}\, \times \,\,1$ dimension CSIT and CSI vectors, respectively, at the $k$-th scheduling slot. The CSI ${\bf{H}}(k)$ is assumed to be quasi-static within a scheduling slot and i.i.d. between scheduling slots. ~\hfill\IEEEQED
\end{Assumption}

\begin{Assumption}[Bursty Source Model]
 Let $\mathcal B_m$ be the random new arrivals (in bits) at the $m$-th frame. The arrival process $\{\mathcal B_m\}$ is i.i.d. over $m$ according to a general distribution $\Pr (\,\mathcal B\,)$ with average $\mathbb{E}[{\mathcal B_m}]=T\overline{B}$ where $\overline{B}$ is the average arrival rate per second. ~\hfill\IEEEQED
\end{Assumption}

\begin{figure}[t!]
\centering
\includegraphics[width=.6\columnwidth]{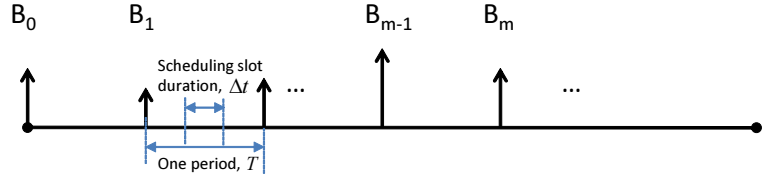}
\caption{The random arrival and delay deadline model.} \label{fig_slotmodel}
\end{figure}

The QSI $U(k)$ is defined as the {\em unfinished work} (i.e. the total number of remaining bits) in the queue at the beginning of the $k$-th scheduling slot. Let $\chi (k)\,\, = \,\,\big( {U(k),\,\,\widehat{\bf{H}}(k)\,} \big)$ be the system state at the $k$-th scheduling slot. Given an observed system state $\chi$, the transmitter adjusts the transmit data rate according to a stationary rate control policy defined below.

\begin{Definition}[Stationary Rate Control Policy]
Let $r(k)$ be the rate allocation action of the OFDM transmitter at the $k$-th scheduling slot. A stationary rate control policy $\Omega$ is a mapping from the system state $\chi$ to a rate control action $r$. Specifically, $r(k) = \Omega(\chi(k))$ for all $k$. ~\hfill\IEEEQED
\end{Definition}

Given a stationary rate control policy $\Omega$, the queue dynamics is given by:
\begin{equation}\label{eq_Ut}
U(k + 1)\,\, = \,\,{\left[ {U(k)\,\, - r(k)(1 - {e(k)})\Delta t\,} \right]^ + } + \,\,{\mathcal B_{\left\lfloor {\frac{{k\Delta t}}{T}} \right\rfloor }}{\bf{1}}\big( {\bmod ( k,T/\Delta t ) = 0} \big)
\end{equation}
where ${x^ + }\,\, = \,\,\max \,\{ x,\,\,0\} $ and $e(k) \in\{0,1\}$ is the packet error indicator at the $k$-th scheduling slot\footnote{We assume there is an error-free and delay-free ACK/NAK feedback from the receiver to the transmitter.}.

\subsection{Power Consumption Model}

At the transmitter, the power consumption is contributed by the {\em transmission power} of the power amplifier and the {\em circuit power} of the RF chains (such as the mixers, synthesizers, phase-lock loop and digital-to-analog converters). The transmission power $P_{tx}$ in general depends on the transmitted data rate $r$ as well as the CSIT quality. {\red Specifically, the transmitted data rate is given by $r = n_F\log(1+\frac{P}{n_F}f(\varepsilon,\sigma_e^2,\mathbf{\hat{H}}) )$ for a target PER $\varepsilon$ \cite{lau_asymptotic_2008}, where $f(\varepsilon,\sigma_e^2,\mathbf{\hat{H}})$ is a ``black box function'' which characterize the behavior of the underlying PHY under imperfect CSIT. In other words, for a given data rate $r$, CSIT error $\sigma^2_e$ and target PER $\varepsilon$, the minimum required transmission power is given by
\begin{equation}\label{eq_ptx}
{P_{tx}}(r;\,\,\widehat{\bf{H}})\,\, = \,\,\frac{{\left( {{e^{\frac{r}{{{n_F}}}}} - 1} \right){n_F}}}{{f(\varepsilon,\sigma_e^2,\mathbf{\hat{H}})}}
\end{equation}
For uniform power-delay profile\footnote{\red Uniform power delay profile is known to be the worst case profile in frequency selective fading channels \cite{wimax:2008,Proakis:2001}. As a result, the closed form expression for $f(\varepsilon,\sigma_e^2,\mathbf{\hat{H}})$ in Lemma \ref{lem_csit} represents the worst case profile. For general power delay profile, there is no closed form expression for $f(\varepsilon,\sigma_e^2,\mathbf{\hat{H}})$ but it can be obtained via offline PHY level simulation.}, the following lemma summarizes an asymptotically accurate relationship at high and low SNR.}
\begin{Lemma}[{\red Relationship between Transmit Power and CSIT Quality}]\label{lem_csit}
{\red Under uniform power-delay profile, for a given data rate $r$, CSIT error $\sigma^2_e$ and target PER $\varepsilon$, $f(\varepsilon,\sigma_e^2,\mathbf{\hat{H}})$ in \eqref{eq_ptx} is given by:}
\begin{equation}
{\red f(\varepsilon,\sigma_e^2,\mathbf{\hat{H}})\,\, \doteq \,\,F_{{\psi ^2};{s^2}}^{ - 1}\left( {{\varepsilon}} \right) }
\end{equation}
where $\doteq$ denotes asymptotic equality for high and low SNR and $F_{\psi^2;s^2}^{-1}$ is the inverse CDF of the non-central chi-square random variable $\psi^2$ with non-centrality parameter $s^2$. The chi-square random variable ${\psi ^2} = \frac{1}{{{N_d}}}\sum\nolimits_{n \in {I_B}} {{{\left| {{H_n}} \right|}^2}} $  has $2N_d$ degrees of freedom and variance $\sigma^2_e/N_d$, where $I_B$ is the set of $N_d$ independent subcarriers. The non-centrality parameter is given by ${s^2}({I_B}) = \frac{1}{{{N_d}}}\sum\nolimits_{n \in {I_B}} {|{{\widehat H}_n}{|^2}}$ . ~\hfill\IEEEQED
\end{Lemma}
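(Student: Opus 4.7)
The plan is to invert the outage-probability definition to solve for $f$. Starting from $\varepsilon = \Pr\bigl[r > \sum_{n=0}^{n_F-1}\log(1 + P_{tx}|H_n|^2/n_F)\,\big|\,\widehat{\bf{H}}\bigr]$ and substituting the candidate $P_{tx} = n_F(e^{r/n_F}-1)/f$, I would verify that at both high and low SNR the implied outage probability collapses to $F_{\psi^2;s^2}(f)$, which yields $f \doteq F^{-1}_{\psi^2;s^2}(\varepsilon)$.

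The first key step is to exploit the finite resolvability of the channel. Because $H_n$ is the $n_F$-point FFT of a length-$N_d$ impulse response, the frequency-domain channel lives in an $N_d$-dimensional subspace and only $N_d$ approximately uniformly spaced subcarriers $I_B$ are (asymptotically) independent. Parseval yields $(1/n_F)\sum_{n=0}^{n_F-1}|H_n|^2 = \sum_{l=0}^{N_d-1}|h_l|^2$, which under a uniform power-delay profile is distributionally identical to $\psi^2 = (1/N_d)\sum_{n\in I_B}|H_n|^2$. Conditioning on $\widehat{\bf{H}}$ and writing $H_n = \widehat{H}_n - \Delta H_n$ with $\Delta H_n$ CSCG of variance $\sigma_e^2$, each of the $N_d$ terms in the sum is an independent squared-Gaussian with non-centrality $|\widehat{H}_n|^2$, so $\psi^2$ is a scaled non-central chi-square with $2N_d$ degrees of freedom, non-centrality $s^2(I_B)$, and the per-component variance scaling $\sigma_e^2/N_d$ asserted in the lemma.

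The second key step is to reduce the outage event to $\{\psi^2 < f\}$ at both ends of the SNR range. At low SNR, applying $\log(1+x)\approx x$ to both the sum and to the defining identity $r = n_F\log(1 + P_{tx}f/n_F)$ converts the outage event into $P_{tx}f > (P_{tx}/n_F)\sum_n|H_n|^2$, i.e.\ $\{\psi^2 < f\}$, giving $f = F^{-1}_{\psi^2;s^2}(\varepsilon)$ exactly to leading order. At high SNR, $\log(1+x)\approx \log x$ turns the outage event into $\{n_F\log f > \sum_n\log|H_n|^2\}$, an event on the geometric mean of the channel gains rather than on their arithmetic mean; here I would argue that its tail CDF and the tail CDF of $\{\psi^2 < f\}$ agree at the diversity/exponential leading order, because the dominant outage mechanism is a simultaneous deep fade of all $N_d$ effectively independent subcarriers.

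The main obstacle is this last high-SNR step: the outage is naturally on a product rather than an arithmetic mean of the channel gains, so I would expand both inverse CDFs near the lower tail and show that they share the same diversity order (the polynomial exponent as $f\to 0$) and the same leading dependence on $\widehat{\bf{H}}$ through $s^2$. The uniform power-delay profile is essential here because it symmetrizes the $N_d$ independent contributions, so that the AM-GM gap contributes only sub-leading terms and does not introduce state-dependent corrections at the leading order implied by $\doteq$.
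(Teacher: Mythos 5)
The paper does not actually prove Lemma~\ref{lem_csit}: the stated ``proof'' is a one-line deferral to the cited reference \cite{lau_asymptotic_2008}, so there is no in-paper argument to match yours against. Judged on its own terms, your reconstruction has the right skeleton and the low-SNR half is essentially complete: inverting the conditional outage probability, using the fact that $H_n$ lies in the $N_d$-dimensional span of $\{h_l\}$ so that only $N_d$ equally spaced subcarriers are independent (note that the paper's error-correlation formula $\sum_l \sigma_{h,l}^2 e^{-j2\pi l(n_1-n_2)/n_F}$ vanishes for distinct spaced subcarriers \emph{only} under the uniform profile, which is exactly where that hypothesis enters), identifying $\frac{1}{n_F}\sum_n |H_n|^2$ with the scaled non-central chi-square $\psi^2$ of non-centrality $s^2(I_B)$, and linearizing $\log(1+x)\approx x$ on both sides to reduce the outage event to $\{\psi^2 < f\}$.

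The genuine gap is the high-SNR half, and your own framing of it would not close. ``Same diversity order'' controls only the polynomial exponent of the two lower-tail CDFs as the threshold tends to $0$; but the lemma's conclusion $f \doteq F^{-1}_{\psi^2;s^2}(\varepsilon)$ evaluates the inverse CDF at a \emph{fixed} target $\varepsilon$, i.e.\ it pins down the value of $f$, not merely an exponent. Two CDFs with identical diversity order can have inverses differing by a constant factor at every $\varepsilon$, so matching exponents does not yield the asserted asymptotic equality. To repair this you need either (a) a two-sided sandwich: Jensen gives $\sum_n \log(1+\tfrac{P_{tx}}{n_F}|H_n|^2) \le n_F\log(1+\tfrac{P_{tx}}{n_F}\cdot\tfrac{1}{n_F}\sum_n|H_n|^2)$, which makes the arithmetic-mean outage event a \emph{subset} of the true outage event and hence gives one inequality for free; you must then supply a matching bound in the other direction showing the AM--GM gap is asymptotically negligible on the boundary of the outage set, which is the actual content of \cite{lau_asymptotic_2008}; or (b) an explicit weakening of $\doteq$ to exponential equality of the outage probability in SNR (the Zheng--Tse sense), in which case your diversity-order argument suffices but proves a strictly weaker statement than a pointwise characterization of $f$. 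As written, the proposal conflates these two readings, and that is where the proof would fail.
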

\begin{proof}
Due to page limitation, please refer to \cite{lau_asymptotic_2008} for the proof.
\end{proof}

\black

The transmission power $P_{tx}(r;\,\,\widehat{\bf{H}})$ is a convex increasing function of the data rate $r$. On the other hand, the CSIT quality affects the $P_{tx}(r;\,\,\widehat{\bf{H}})$ via the variance of the inverse chi-square CDF $F_{{\psi ^2};{s^2}}^{ - 1}(x)$. Specifically, a larger transmission power is required for the same data rate $r$ when the CSIT error increases. Fig. \ref{fig2} illustrates the $P_{tx}(r;\,\,\widehat{\bf{H}})$ versus $r$ at different CSIT errors {\red under uniform power-delay profile}. Observe that a larger transmission power is required for the same data rate $r$ when the CSIT error increases.

\begin{figure}[t!]
\centering
\includegraphics[width=.5\columnwidth]{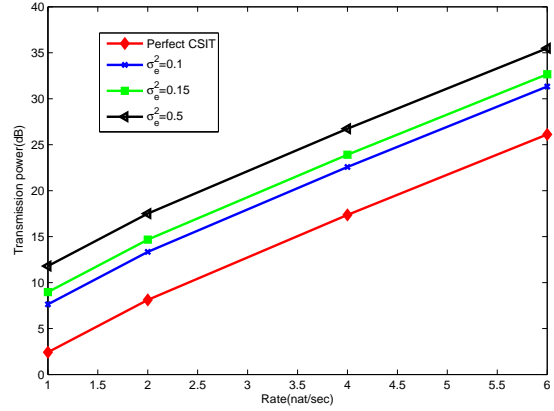}
\caption{Required transmission power (dB) versus data rate with different CSIT error $\sigma^2_e$, target PER=0.01, single subcarrier, bandwidth = 1Hz.} \label{fig2}
\end{figure}

On the other hand, the circuit power $P_{cct}$ is constant regardless of the transmission data rate. The transmission circuit is assumed to be on during the transmission of a burst. Hence, the total power consumption at the transmitter during a burst is given by:
\[g\left( {r,\,\,\,\chi } \right)\,\, = {P_{tx}}\left( {r,\,\widehat{\bf{H}}} \right)\, + {P_{cct}}. \]

Note that there are two conflicting dynamics due to (a) bursty arrivals and (b) energy efficiency regarding whether one should increase the transmission time to finish the data backlog. It is known \cite{berry_communication_2002} that one should try to reduce the data rate when the buffer level is small so as to decrease the chance of the buffer being empty.
Furthermore, the transmission power increases exponentially with the data rate and hence, one should increase the transmission time so as to save energy \cite{zafer_minimum_2009}. On the other hand, when the circuit power is taken into consideration, the tradeoff dynamics will be changed because the energy expenditures resulting from circuit power are proportional to the transmission time of the system\cite{shuguang_cui_energy-constrained_2005}. As a result, it is not always advantageous to increase the transmission time.

\section{Dynamic BackPressure Power Control}

In this section, we first focus on a queue stabilization problem and derive the throughput optimal policy DBP using the {\red Lyapunov function $L(U) = U^{2}/{{2 }}$ \cite{andrews_scheduling_2004}}.

\subsection{Preliminaries of Stochastic Stability}
We first introduce a few definitions. We say that the OFDM link with bursty arrival is \emph{strongly stable} if: \[\lim \mathop {\sup }\limits_{K \to \infty } \frac{1}{K}\sum\nolimits_{k = 1}^K {\mathbb{E}[U(k)]} \,\, < \,\,\infty \,\,.\]

\begin{Definition}[Stability Region] The stability region $\Lambda_{\Omega}$ of policy $\Omega$ is the set of average arrival rates $\lambda$ for which the system is stable under $\Omega$. The stability region of the system $\Lambda$ is the closure of the set of all average arrival rates $\lambda$ for which a stabilizing control policy exists. Mathematically, we have $\Lambda \,\, = \,\,\mathop { \cup \,}\limits_{\Omega  \in G} \,{\Lambda _\Omega }$, where $G$ denotes policy space. ~\hfill\IEEEQED
\end{Definition}

\begin{Definition}[Throughput-Optimal Policy] A throughput-optimal policy dominates\footnote{A policy $\Omega_1$ \emph{dominates} another policy $\Omega_2$ if $\Lambda_{\Omega_2} \subset \Lambda_{\Omega_1}$.} any other policy in $G$, i.e. it has a stability region that is the superset of the stability region of any other policy in $G$. Therefore, it should have a stability region equal to $\Lambda$. ~\hfill\IEEEQED
\end{Definition}

In  other  words,  throughput-optimal  policies  ensure  that  the  queueing  system  is  stable  as long as the vector arrival rate is within the system stability region $\Lambda$. Note that the throughput optimal policy is not unique. While throughput optimality does not guarantee delay optimality in the system, the former policy can still improve the delay performance. Furthermore, using Lyapunov analysis techniques \cite{georgiadis_resource_2006}, the throughput optimal policy derived usually has a simple form, which is desirable for implementation.


Consider the Lyapunov function $L(U) = {U^{2}{{}}}/{{2 }}$ \cite{andrews_scheduling_2004},
and define the one-step Lypunov drift $\Delta L(U)$ as:
\begin{equation}\label{Lyadef}
\Delta \left( {L(U(k) )} \right) \buildrel \Delta \over = \mathbb E \left[ {L(U(k + 1) ) - L(U(k) )\left| {\,\,U(k)} \right.} \right].
\end{equation}

The following Lemma summarizes the results on the Lyapunov drift. {\red
\begin{Lemma}[Lyapunov Drift]\label{lem_lyaD} Let $A(k)$ be the arrival process of queue dynamics (\ref{eq_Ut}). Furthermore, let $A(k)\, \le \,{A_{\max }}$ and $r(k)(1 - e(k) )\Delta t \le \,{R_{\max }}$, for some positive $A_{max}$ and $R_{max}$. The one step Lyapunov drift for the OFDM link with imperfect CSIT is given by:
\begin{equation}\label{LyaD}
\Delta \left( {L(U(k))} \right)\,\, \le \,\,\,\frac{1}{2 }\big(A_{\max }^{2 } + R_{\max }^{2 }\big) - \mathbb E\left[ U{{(k)} }\left\{ {r(k)(1 - e(k) )\Delta t - A(k)} \right\}\,\left| {\,\,U(k)} \right. \right]
\end{equation}
\end{Lemma}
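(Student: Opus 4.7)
The plan is to follow the standard Lyapunov drift argument adapted to the queue recursion \eqref{eq_Ut}. Let me abbreviate the per-slot (effective) service by $S(k)\triangleq r(k)(1-e(k))\Delta t$ and write the arrival term as $A(k)$, so that \eqref{eq_Ut} reads $U(k+1)=[U(k)-S(k)]^{+}+A(k)$. By hypothesis $0\le S(k)\le R_{\max}$ and $0\le A(k)\le A_{\max}$, and $U(k)$ is measurable with respect to the conditioning sigma-algebra, so expanding $L(U)=U^2/2$ will be straightforward.

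First I would square the recursion. Using $([x]^{+}+y)^{2} = ([x]^{+})^{2}+2y[x]^{+}+y^{2}$ and the elementary facts $([x]^{+})^{2}\le x^{2}$ and $[x]^{+}\le \max(x,0)\le x$ when we subsequently pair it with the nonnegative multiplier $A(k)$ (noting $[U(k)-S(k)]^{+}\le U(k)$ because $S(k)\ge 0$), I obtain
\begin{equation*}
U(k+1)^{2}\;\le\;\bigl(U(k)-S(k)\bigr)^{2}+2A(k)\,U(k)+A(k)^{2}.
\end{equation*}
Expanding the first term gives $U(k)^{2}-2U(k)S(k)+S(k)^{2}$, so
\begin{equation*}
U(k+1)^{2}-U(k)^{2}\;\le\;S(k)^{2}+A(k)^{2}-2U(k)\bigl(S(k)-A(k)\bigr).
\end{equation*}

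Next I would apply the uniform bounds to the quadratic remainder: $S(k)^{2}\le R_{\max}^{2}$ and $A(k)^{2}\le A_{\max}^{2}$. Dividing both sides by $2$, the left-hand side becomes $L(U(k+1))-L(U(k))$, so
\begin{equation*}
L(U(k+1))-L(U(k))\;\le\;\tfrac{1}{2}\bigl(A_{\max}^{2}+R_{\max}^{2}\bigr)-U(k)\bigl(S(k)-A(k)\bigr).
\end{equation*}
Finally I would take conditional expectation given $U(k)$ on both sides. Because $U(k)$ is $U(k)$-measurable and the bound $\tfrac{1}{2}(A_{\max}^{2}+R_{\max}^{2})$ is deterministic, applying the definition \eqref{Lyadef} immediately yields the claimed inequality \eqref{LyaD}.

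There is no real obstacle here; this is a routine Foster/Neely-style drift computation. The only subtlety worth flagging is the step $[U(k)-S(k)]^{+}\le U(k)$, which relies on $S(k)\ge 0$; this is automatic since both $r(k)$, $\Delta t$, and the error indicator complement $(1-e(k))\in\{0,1\}$ are nonnegative. Everything else is a matter of expanding the square and applying the worst-case bounds to get a state-independent constant, leaving the linear-in-$U(k)$ term that the DBP controller will later exploit.
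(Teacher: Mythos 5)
Your proof is correct and is exactly the standard quadratic-Lyapunov drift computation that the paper itself invokes (it omits the proof, citing the technique of \cite{georgiadis_resource_2006}). The only cosmetic slip is the parenthetical chain $[x]^{+}\le\max(x,0)\le x$, which is backwards as written ($\max(x,0)\ge x$); the inequality you actually use, $[U(k)-S(k)]^{+}\le U(k)$ for $S(k)\ge 0$ and $U(k)\ge 0$, is the right one and you justify it correctly at the end.
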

\begin{proof}
The proof follows similar technique as in \cite{georgiadis_resource_2006} and is omitted.
\end{proof}
}

\subsection{Dynamic BackPressure Algorithm}
Based on the Lyapunov drift $\Delta \left( {L(U(k) )} \right)$ in Lemma \ref{lem_lyaD}, the DBP algorithm can be derived by maximizing the negative drift term in \eqref{LyaD}. Given an observed state $\chi$ at any scheduling time slot, the instantaneous data rate $r_{DBP}(U,\widehat{\bf{H}})$ is given by:
\begin{equation}\label{eq_ropt1}
{r_{DBP}}\left( {U,\,\widehat{\bf{H}}} \right) = \mathop {\arg \max }\limits_r \,\,\left\{ {{U }r(1 - {\varepsilon})\Delta t\,\, - \,V\left( {{P_{tx}}(r;\widehat{\bf{H}}) + {P_{cct}}} \right)\Delta t} \right\}
\end{equation}
where $V$ is a constant which determines the tradeoff between power consumption and delay performance. Solving the problem in (\ref{eq_ropt1}) for the DBP, the instantaneous data rate ${r_{DBP}}(\chi)$ is given by:
\begin{equation}\label{eq_ropt2}
{r_{DBP}}\left( \chi  \right) = {n_F}{\bigg[ {\log \Big\{ {\frac{{{U }(1 - {\varepsilon})f(\epsilon,\sigma_e^2,\mathbf{\hat{H}})}}{V}} \Big\}} \bigg]^ + }.
\end{equation}

During a burst transmission, the instantaneous power consumption under DBP is given by:
\begin{equation}\label{eq_Popt}
g_{DBP}^{}\left( \chi  \right)\,\, = \,\left\{ {\begin{array}{*{20}{c}}
   {\frac{{{U }(1 - {\varepsilon}){n_F}}}{V}\, - \frac{{{n_F}}}{{f(\epsilon,\sigma_e^2,\mathbf{\hat{H}})}} + {P_{cct}},\,\,\,{\rm{if}}\,\,\frac{{{U}(1 - {\varepsilon}){n_F}}}{V}\, - \frac{{{n_F}}}{f(\epsilon,\sigma_e^2,\mathbf{\hat{H}})} > 0}  \\
   {0\,\,,\,\,\,\,\,\,\,\,\,\,\,\,\,\,\,\,\,\,\,\,\,\,\,\,\,\,\,\,\,\,\,\,\,\,\,\,\,\,\,\,\,\,\,\,\,\,\,{\rm{otherwise}.}\,\,\,\,\,\,\,\,\,\,\,}  \\
\end{array}} \right.
\end{equation}

\begin{Remark}[Multilevel Water-Filling Structure of the DBP]
The power control action in (\ref{eq_Popt}) is a function of both CSIT and QSI (where it depends on the CSIT indirectly via the noncentrality parameter ${s^2}({I_B}) = \frac{1}{{{N_d}}}\sum\nolimits_{n \in {I_B}} {|{{\widehat H}_n}{|^2}}$ in $f(\epsilon,\sigma_e^2,\mathbf{\hat{H}})$). It has the form of \emph{multilevel water-filling structure} where the transmission power is allocated according to the CSIT but the waterlevel is adaptive to the QSI. The parameter $V$ acts like the Lagrange Multiplier which determines the tradeoff between power consumption and delay. Furthermore, the $P_{cct}$ affects the power control (or rate control solution) in (\ref{eq_Popt}) and (\ref{eq_ropt2}) by introducing a penalty proportional to a burst transmission time. ~\hfill\IEEEQED
\end{Remark}


\section{Delay-Power Tradeoff of DBP with imperfect CSIT}

While the DBP is throughput optimal, we are interested in studying the end-to-end delay performance and the relationship between the average delay, average power and the CSIT quality.
In this section, we shall analyze the power-delay tradeoff using continuous time approximation and renewal process theory. {\red As we shall illustrate, this approach not only yields first order tradeoff relationship (at small delay regime) but also yields the second-order impacts due to imperfect CSIT and static circuit power $P_{cct}$.}

\subsection{Per-Period Unfinished Works}

To analyze the average delay, we first focus on the analysis of one arrival period $T$. Specifically, define the per-period average unfinished work $\overline{j^U}(U_0)$ and the per-period average energy consumption $\overline{j^g}(U_0)$ as:
\begin{equation}\label{eq_jUdef}
\overline {{j^U}} ({U_0})\, = \mathbb E\left[ {\sum\nolimits_{k = 0}^{N - 1} {{U_k}\Delta t\,\,\left| {{U_0}} \right.} } \right]
\end{equation}
\begin{equation}\label{eq_jgdef}
\overline {{j^g}} ({U_0})\, = \mathbb E\left[ {\sum\nolimits_{k = 0}^{N - 1} {g\left( {r(k)\,\,,\,\chi (k)} \right)\Delta t} \,\left| {{U_0}} \right.} \right]
\end{equation}
where $U_0$ is the leftover bits at the buffer at the starting epoch of a period $T$. To compute $\overline{j^U}(U_0)$ and $\overline{j^g}(U_0)$, we shall adopt a continuous time approach. Specifically, we define a {\em virtual continuous time system} (VCTS) as follows:
\begin{Definition}[Virtual Continuous Time Systems]
A virtual continuous time system is a fictitious system with a continuous queue state $\widetilde{U}(t)$ and channel state $\widetilde{\bf H}(t)$. The fictitious queue state evolves according to the following dynamic equation:
\begin{equation}\label{eq_ode}
\frac{{d\widetilde U(t)}}{{dt}}\,\, =  - \mathbb E\left[ {{r^*}\left( {\widetilde U(t),\,\,\widetilde {\bf H}(t)} \right)\left| {\widetilde U} \right.(t)} \right](1 - \varepsilon )\,
\end{equation}
where $r^{*}$ is the DBP rate policy in (\ref{eq_ropt2}) and the fictitious channel state is a white process\footnote{\red It means that for any given $t$, the distribution of random variable $\widetilde{\mathbf{H}}(t)$ is the same as that of the actual CSIT $\hat{\mathbf{H}}$. Furthermore, we have
$\mathbb{E}[\widetilde{\mathbf{H}}(t)]=\mathbb{E}[\hat{\mathbf{H}}]=\mathbf{0},
\mathbb{E}[\widetilde{\mathbf{H}}(t_1)\widetilde{\mathbf{H}}(t_2)]=\mathbb{E}[\hat{\mathbf{H}}\hat{\mathbf{H}}^H]\delta(t_1-t_2)$.} with identical distribution as the actual CSIT $\widehat{\bf H}$. Using the VCTS, the corresponding continuous time \emph{average per-period unfinished work} $\overline {{J^U}}(\widetilde U,t)$ and \emph{average per-period energy consumption} $\overline {{J^g}}(\widetilde U,t)$ are defined as follows:
\begin{equation}\label{def_JU}
\overline {{J^U}} (\widetilde U,t) \buildrel \Delta \over = \mathbb E\left[ {\int_t^T {\widetilde U(s)\,\,ds} \left| {\widetilde U} \right.(t)} \right]
\end{equation}
\begin{equation}\label{def_Jg}
\overline {{J^g}} (\widetilde U,t) \buildrel \Delta \over = \mathbb E\left[ {\int_t^T {g\left( {{r^*}\left( {\widetilde U(s),\,\widetilde{\bf{H}}(s)} \right)\,,\,\widetilde\chi (s)} \right)\,\,ds} \left| {\,\widetilde U(t)} \right.} \right]
\end{equation}
where $\widetilde\chi (t) = \,\,\big( {\widetilde U(t),\,\widetilde{\bf{H}}(t)} \big)$. ~\hfill\IEEEQED
\end{Definition}

{\red Note that \eqref{def_JU},\eqref{def_Jg} are the continuous-time counterparts of the discrete-time versions in (\eqref{eq_jUdef}),(\eqref{eq_jgdef}). They measure the total queue length (total area) of the {\em queue trajectory} and total energy consumption (total area) of the {\em power trajectory} during an inter-arrival interval, respectively. The queue trajectory and $\{\overline {{j^U}},\overline {{J^U}}\}$ are illustrated in Fig. \ref{fig_trajectory}. } \black To derive the actual per-period average unfinished work and energy consumption $\overline{j^U}$ and $\overline{j^g}$ in the discrete time, we first determine the continuous time counterparts using the following lemma.

\begin{figure}[t!]
\centering
\includegraphics[width=.8\columnwidth]{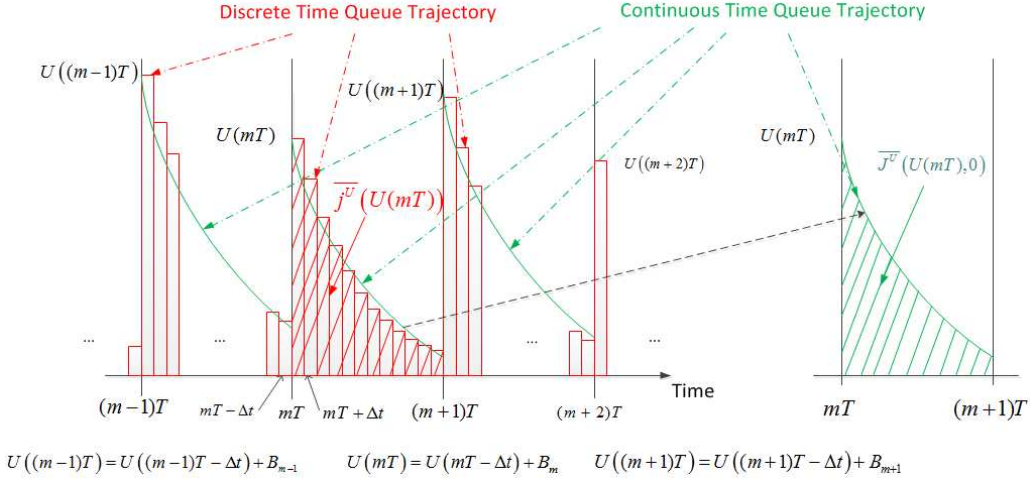}
\caption{{\red Illustration of the discrete time and VCTS queue trajectory and the unfinished works $\{\overline {{j^U}},\widetilde {{J^U}}\}$ in one inter-arrival interval $T$. } \black} \label{fig_trajectory}
\end{figure}

\begin{Lemma}[Verification Lemma of $\overline {{J^U}}$ and $\overline {{J^g}}$ {\red in VCTS}]\label{lem_hjb}
If there exists a continuous and differentiable function $\overline {J^U}^*(\widetilde U,t)$ and $\overline {J^g}^*(\widetilde U,t)$ satisfying the following partial differential equations (PDEs), respectively:
\begin{equation}\label{eq_hjb1}
- \frac{{\partial \overline {{J^U}} }}{{\partial \widetilde U}}(1 - \varepsilon )\mathbb E\left[ {{r^*}\left( {\widetilde U(t),\,\,\widetilde {\bf H}(t)} \right)\,\,\left| {\widetilde U(t)} \right.} \right] + \widetilde U(t) + \,\frac{{\partial \overline {{J^U}} }}{{\partial t}} = 0
\end{equation}
\begin{equation}\label{eq_hjb2}
\mathbb E\left[ {g({r^*}( \cdot ),\widetilde\chi (t))\,\left| {\widetilde U(t)} \right.} \right] - \frac{{\partial \overline {{J^g}} }}{{\partial \widetilde U}}(1 - \varepsilon )\mathbb E\left[ {{r^*}\left( {\widetilde U(t),\,\,\widetilde{\bf{H}}(t)} \right)\,\,\left| {\widetilde U(t)} \right.} \right] + \,\frac{{\partial \overline {{J^g}} }}{{\partial t}} = 0
\end{equation}
then, $\overline {J^U}^*(\widetilde U,t)$ and $\overline {J^g}^*(\widetilde U,t)$ are the total average unfinished work and total average energy consumption of the VCTS under the DBP in (\ref{eq_ode}). ~\hfill\IEEEQED
\end{Lemma}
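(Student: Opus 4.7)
The plan is to treat Lemma \ref{lem_hjb} as a classical verification result, where the PDEs \eqref{eq_hjb1} and \eqref{eq_hjb2} are the continuous-time analogues of the Bellman equations for the integral cost functionals \eqref{def_JU} and \eqref{def_Jg}. The key simplification is that, because $\widetilde{\mathbf{H}}(s)$ is a white process with the same marginal as $\widehat{\mathbf{H}}$, the drift $\mathbb{E}[r^{*}(\widetilde U,\widetilde{\mathbf{H}})\,|\,\widetilde U]$ in \eqref{eq_ode} is a deterministic function of $\widetilde U$ alone, so $\widetilde U(s)$ is deterministic once $\widetilde U(t)$ is specified. By Fubini, the same is true of the integrand of \eqref{def_Jg}, so both cost-to-go functions reduce to deterministic path integrals along the ODE trajectory.

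First I would differentiate the candidate $\overline{J^U}^{*}(\widetilde U(s),s)$ in $s$ along the trajectory by the chain rule,
\[
\frac{d}{ds}\,\overline{J^U}^{*}(\widetilde U(s),s)=\frac{\partial \overline{J^U}^{*}}{\partial \widetilde U}\,\frac{d\widetilde U}{ds}+\frac{\partial \overline{J^U}^{*}}{\partial s},
\]
substitute the ODE \eqref{eq_ode} for $d\widetilde U/ds$, and invoke the PDE \eqref{eq_hjb1}; the right-hand side then collapses to $-\widetilde U(s)$. Integrating from $t$ to $T$ and using the terminal condition $\overline{J^U}^{*}(\widetilde U(T),T)=0$ (which is the natural boundary value forced by \eqref{def_JU}, since $\int_T^T(\cdot)\,ds=0$), one recovers $\overline{J^U}^{*}(\widetilde U(t),t)=\int_t^T \widetilde U(s)\,ds=\overline{J^U}(\widetilde U,t)$. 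The argument for $\overline{J^g}^{*}$ is identical: using \eqref{eq_ode} and \eqref{eq_hjb2} the total time derivative reduces to $-\mathbb{E}[g(r^{*}(\cdot),\widetilde\chi(s))\,|\,\widetilde U(s)]$, and integrating with the same terminal condition returns \eqref{def_Jg}.

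The main subtlety is the regularity of the DBP rate $r^{*}$ in \eqref{eq_ropt2} and the instantaneous power $g$ in \eqref{eq_Popt}, both of which contain the $[\,\cdot\,]^{+}$ clipping and are only piecewise smooth in $\widetilde U$. I would dispose of this by observing that the conditional expectation over $\widetilde{\mathbf{H}}$ smooths both quantities, since $\widetilde{\mathbf{H}}$ has an absolutely continuous density on $\mathbb{C}^{n_F}$; the drift and running-cost coefficients of \eqref{eq_hjb1} and \eqref{eq_hjb2} are therefore continuous in $\widetilde U$, and the hypothesis that $\overline{J^U}^{*},\overline{J^g}^{*}$ are $C^{1}$ makes the chain-rule step unambiguous. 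A minor gap in the lemma statement is that the natural terminal condition $\overline{J^U}^{*}(\cdot,T)=\overline{J^g}^{*}(\cdot,T)=0$ is left implicit; I would make it explicit in the proof. The genuinely hard work, namely constructing closed-form solutions of \eqref{eq_hjb1}, \eqref{eq_hjb2} and linking them back to the discrete-time $\overline{j^U},\overline{j^g}$, is deferred to subsequent sections; the present lemma is merely the standard device that converts the integral definitions into first-order PDEs that can actually be solved.
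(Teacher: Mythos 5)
Your proposal is correct, but it runs in the opposite logical direction from the paper's own proof, and arguably the more appropriate one for a lemma phrased as ``if a function satisfies the PDE, then it equals the value function.'' The paper (Appendix A) starts from the definitions \eqref{def_JU}, \eqref{def_Jg}, applies the divide-and-conquer (dynamic programming) decomposition over a short interval $[t,t+t_\Delta]$, Taylor-expands the value function, divides by $t_\Delta$, and lets $t_\Delta\to 0$ — i.e., it shows that the value function \emph{satisfies} the PDE. You instead take an arbitrary $C^1$ solution of the PDE, differentiate it along the deterministic VCTS trajectory via the chain rule, use the PDE to collapse the total derivative to $-\widetilde U(s)$ (respectively $-\mathbb E[g(r^*(\cdot),\widetilde\chi(s))\,|\,\widetilde U(s)]$), and integrate from $t$ to $T$ — the classical verification argument. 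The two are the dual halves of the same HJB correspondence; your direction buys a genuine uniqueness statement, at the price of needing the terminal condition $\overline{J^U}^*(\cdot,T)=\overline{J^g}^*(\cdot,T)=0$, which you correctly observe is left implicit in the lemma and must be supplied for the argument (or for well-posedness of the first-order PDE) to close. Your supporting observations — that the whiteness of $\widetilde{\mathbf H}$ makes the drift a deterministic function of $\widetilde U$ so the trajectory is an ODE solution, and that the conditional expectation over $\widetilde{\mathbf H}$ smooths the $[\cdot]^+$ clipping in $r^*$ and $g$ — are both sound and are implicitly relied upon by the paper as well.
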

\begin{proof}
{\red Note that (\ref{eq_hjb1}),(\ref{eq_hjb2}) resembles the Bellman equation of the discrete time dynamics in (\ref{eq_jUdef}), (\ref{eq_jgdef}). The proof is obtained using Taylor expansion of the value function and using the {\em divide-and-conquer} principle from (\ref{def_JU}),(\ref{def_Jg}). Please refer to Appendix A for details. Using Lemma \ref{lem_hjb}, the areas of the {\em queue trajectory} and the {\em power trajectory} can be obtained by solving the PDEs in (\ref{eq_hjb1}),(\ref{eq_hjb2}).}
\end{proof}

Finally, $\overline{j^U}(U_0)$ and $\overline{j^g}(U_0)$ are related to the continuous time counterparts by the following Theorem.
\begin{Theorem}[Relationship between the Continuous Time and Discrete Time Unfinished Works]\label{the_1}
 Let $\overline {J^U}^*(\widetilde U,t)$ and $\overline {J^g}^*(\widetilde U,t)$ be the solutions of the PDEs (\ref{eq_hjb1}) and (\ref{eq_hjb2}), respectively, in the virtual continuous time system. {\red For sufficiently small $\Delta t$,} the discrete-time per-period average unfinished work and energy consumption $\overline{j^{U}}$ and $\overline{j^{g}}$ are given by: ${\overline {j_{}^U}(U_0)}\,\, = \,\,\,{\overline {{J^U}} ^*(U_0,0)}\,\, + \mathcal O(\Delta t)$ and ${\overline {j_{}^g}(U_0)}\,\, = \,\,\,{\overline {{J^g}} ^*(U_0,0)}\,\, + \mathcal O(\Delta t)$. ~\hfill\IEEEQED
\end{Theorem}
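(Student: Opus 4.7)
The plan is to view the discrete-time queue recursion (\ref{eq_Ut}) within a single inter-arrival period as a forward-Euler discretization of the VCTS ODE (\ref{eq_ode}) with step size $\Delta t$, and then to convert a uniform sample-path error bound into a Riemann-sum error bound on the cost integrals in (\ref{eq_jUdef})--(\ref{eq_jgdef}).

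First, within one arrival period there are no new arrivals after the starting epoch, so (\ref{eq_Ut}) reduces to $U(k+1)=\left[U(k)-r_{DBP}(k)(1-e(k))\Delta t\right]^+$ with $U(0)=U_0$. Conditioning on $U(k)$ and averaging over the i.i.d.\ CSIT $\widehat{\mathbf H}(k)$ and the Bernoulli packet-error indicator $e(k)$, and working on the interior region $U>0$, yields
\[
\mathbb E[U(k+1)-U(k)\mid U(k)]=-(1-\varepsilon)\,\mathbb E\!\left[r_{DBP}(U(k),\widehat{\mathbf H}(k))\mid U(k)\right]\Delta t,
\]
which matches the drift of (\ref{eq_ode}) evaluated at $t=k\Delta t$. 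Hence $U_k$ and $\widetilde U(k\Delta t)$ satisfy the same ODE up to zero-mean martingale-difference noise with $\mathcal O(\Delta t)$ conditional variance.

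Next, I would compare $U_k$ with $\widetilde U_k:=\widetilde U(k\Delta t)$. A Taylor expansion of the ODE solution gives $\widetilde U_{k+1}-\widetilde U_k=-(1-\varepsilon)\mathbb E[r^*(\widetilde U_k,\widetilde{\mathbf H})\mid \widetilde U_k]\Delta t+\mathcal O(\Delta t^2)$, with the same drift. The map $U\mapsto \mathbb E[r_{DBP}(U,\widehat{\mathbf H})\mid U]$ is Lipschitz in $U$ on any bounded set, inherited from the logarithmic form of (\ref{eq_ropt2}) together with the boundedness of $r$ enforced by the finite $R_{\max}$ of Lemma \ref{lem_lyaD}. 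Taking expectations before invoking Gronwall (so the martingale increments vanish), a discrete Gronwall inequality on $\delta_k:=|\mathbb E[U_k]-\widetilde U_k|$ yields $\delta_k=\mathcal O(\Delta t)$ uniformly for $k\Delta t\in[0,T]$. I would then transfer this to the accumulated cost by writing
\[
\sum_{k=0}^{N-1}\mathbb E[U_k]\Delta t-\int_0^T\mathbb E[\widetilde U(s)]\,ds=\sum_{k=0}^{N-1}\int_{k\Delta t}^{(k+1)\Delta t}\!\!\big(\mathbb E[U_k]-\mathbb E[\widetilde U(s)]\big)\,ds,
\]
and bounding each integrand by $\delta_k$ plus the $\mathcal O(\Delta t)$ time-variation of $\widetilde U$, so that the total telescopes to $N\cdot\mathcal O(\Delta t^2)=\mathcal O(\Delta t)$. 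This proves the first assertion; applying the same argument to the integrand $g(\cdot)$ of (\ref{def_Jg}), which is Lipschitz in $(U,\widehat{\mathbf H})$ on the active region of (\ref{eq_Popt}), gives the energy bound.

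The main obstacle is the non-smoothness introduced by the $[\cdot]^+$ operators in (\ref{eq_Ut}) and (\ref{eq_ropt2}) and the consequent boundary behavior at $U=0$: the drift-matching and Lipschitz estimates used above are only strictly valid in the interior. The cleanest remedy is a stopping-time argument that splits the trajectory into a portion bounded away from zero (where the smooth estimates apply) and a boundary portion whose sojourn length, multiplied by the bounded rate $R_{\max}$ and the bounded power $P_{tx}+P_{cct}$, contributes at most $\mathcal O(\Delta t)$ to each accumulated cost, thereby preserving the overall $\mathcal O(\Delta t)$ bound.
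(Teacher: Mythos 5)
Your argument is correct in outline but follows a genuinely different route from the paper. The paper works backward on the \emph{value functions}: it writes the exact discrete-time Bellman-type recursions (\ref{jU_rec})--(\ref{jg_rec}) for $\overline{j^U}$ and $\overline{j^g}$, Taylor-expands the PDE solution $\overline{J^U}^*(\widetilde U,k\Delta t)$ to show that the sampled continuous-time value function satisfies the \emph{same} recursion up to a local truncation error of $\mathcal O(\Delta t^2)$ per step, and lets the $N=T/\Delta t$ per-step errors accumulate to $\mathcal O(N\Delta t^2)=\mathcal O(\Delta t)$. Because two functions obeying (almost) the same dynamic-programming operator with the same terminal condition are being compared, the distribution of $U_k$ never has to be tracked. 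You instead compare \emph{trajectories} forward in time: Euler consistency of the drift, a discrete Gronwall bound on $|\mathbb E[U_k]-\widetilde U_k|$, and a Riemann-sum conversion to the accumulated cost. This is also a viable strategy, and it has the merit of confronting the $[\cdot]^+$ boundary at $U=0$ explicitly (the paper essentially waves this away with the remark that $\overline{r^*}=0$ at $\widetilde U=0$). However, it introduces one subtlety that the paper's value-function route sidesteps and that you should make explicit: $\widetilde U(t)$ is deterministic while $U_k$ is random, and both the conditional mean rate $\bar r(U)=\mathbb E[r_{DBP}(U,\widehat{\mathbf H})\mid U]$ and the conditional mean power $\bar g(U)$ are nonlinear in $U$, so you must control the Jensen gaps $|\mathbb E[\bar r(U_k)]-\bar r(\mathbb E[U_k])|$ and $|\mathbb E[\bar g(U_k)]-\bar g(\mathbb E[U_k])|$. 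Lipschitz continuity alone---which is all you invoke for $g$---only bounds these by $\mathcal O\big(\sqrt{\mathrm{Var}(U_k)}\big)=\mathcal O(\sqrt{\Delta t})$, which would degrade the final estimate to $\mathcal O(\sqrt{\Delta t})$. The repair is to note that $\mathrm{Var}(U_k)=\mathcal O(\Delta t)$ (the increments are bounded martingale differences of size $\mathcal O(\Delta t)$ over $N=T/\Delta t$ steps) and that $\bar r$ and $\bar g$, after averaging over the continuous CSIT distribution, are twice differentiable away from their kinks, so each Jensen gap is $\mathcal O(\mathrm{Var}(U_k))=\mathcal O(\Delta t)$ and contributes only $\mathcal O(\Delta t)$ in total after the Gronwall accumulation and the weighted sum over $k$. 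With that addition, together with your stopping-time treatment of the boundary, your proof establishes the same $\mathcal O(\Delta t)$ conclusion as the paper's.
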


Please refer to Appendix B for the proof. {\red As a result of Theorem \ref{the_1}, we can focus on the continuous time equations to solve the area of queue and power trajectories and we can be assured that the solutions obtained will be accurate up to $\mathcal O(\Delta t)$. } Using Lemma \ref{lem_hjb} and solving the associated PDEs in (\ref{eq_hjb1}) and (\ref{eq_hjb2}), we shall obtain an asymptotically accurate performance bounds of $\overline {J^U}^*(\widetilde U,t)$ and $\overline {J^g}^*(\widetilde U,t)$ which is summarized below.

\begin{Theorem}[Performance Bounds of $\overline {J^U}^*(\widetilde U,t)$ and $\overline {J^g}^*(\widetilde U,t)$ {\red in VCTS}]\label{the_bounds} The per-period average unfinished work $\overline {J^U}^*(\widetilde U,0)$ in (\ref{def_JU}) and energy consumption $\overline {J^g}^*(\widetilde U,0)$ in (\ref{def_Jg}) are given by
\begin{equation}\label{eq_JUup}
{\overline {{J^U}} ^*}(\widetilde U\,,0)\,\, \le \,\,\int_0^T {y(t;\,\beta )\,} dt\,
\end{equation}
\begin{equation}\label{eq_Jglow}
{\overline {{J^g}} ^*}(\widetilde U,0)\ge \int_0^T {{{\bigg[ {\frac{{{y }(t;\beta '){n_F}(1 - {\varepsilon})}}{V} + \mathbb E\Big[ {{{\Big[ {\frac{{\widetilde U_0 {n_F}(1 - {\varepsilon})}}{V} - \frac{{{n_F}}}{{f(\epsilon,\sigma_e^2,\mathbf{\hat{H}})}}} \Big]}^ + } + {P_{cct}}} \Big]- \frac{{\widetilde U_0 {n_F}(1 - {\varepsilon})}}{V}} \bigg]}^ + }dt}
\end{equation}
where $y(t;\,\beta ) = \exp \left[ { - \beta {\kern 1pt} {\kern 1pt}  + {\kern 1pt} {\rm{E}}{{\rm{i}}^{( - 1)}}\left[ {{\rm{Ei}}\left( {\log (\widetilde U_0^{}) + \beta} \right) - {n_F}(1 - {\varepsilon}){e^{\beta  }}t} \right]} \right]$,
$\beta  = \,\mathbb E\left[ {\log ((1 - {\varepsilon})\,f(\epsilon,\sigma_e^2,\mathbf{\hat{H}}))} \right]$,
and $\beta ' = \,\mathbb E\left[ \big(\log ((1 - {\varepsilon})\,f(\epsilon,\sigma_e^2,\mathbf{\hat{H}}))\big)^ + \right]$. ${\rm{Ei}}(x) = \,\int_{ - \infty }^x {{e^t}/t} \,dt$ for $x>0$, is the exponential integral function. The bounds are asymptotically accurate as V approaches 0. ~\hfill\IEEEQED
\end{Theorem}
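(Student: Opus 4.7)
\emph{Proof proposal.}
The plan is to leverage Lemma \ref{lem_hjb} to reduce both bounds to solving ordinary differential equations (ODEs) derived from the VCTS queue dynamics \eqref{eq_ode}, after replacing the intractable positive-part operator in the DBP rate \eqref{eq_ropt2} and DBP power \eqref{eq_Popt} by bounding expressions of the appropriate sign.

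For the upper bound on $\overline{J^U}^*(\widetilde U,0)$, I would substitute the DBP rate \eqref{eq_ropt2} into the PDE \eqref{eq_hjb1} and use $[x]^+\ge x$ to lower-bound the conditional expected rate by $n_F(\log\widetilde U+\beta-\log V)$, where $\beta=\mathbb E[\log((1-\varepsilon)f(\varepsilon,\sigma_e^2,\widehat{\bf H}))]$. Because a smaller rate slows queue drainage, the comparison principle for ODEs implies that the solution of $\dot y=-n_F(1-\varepsilon)(\log y+\beta-\log V)$ with $y(0)=\widetilde U_0$ pointwise upper-bounds the conditional mean trajectory of $\widetilde U(t)$, and integrating over $[0,T]$ yields the stated bound. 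To obtain a closed-form $y(t;\beta)$, I would apply the substitution $w=\log y+\beta$, which separates the ODE into $\tfrac{e^w}{w}\,dw=-n_F(1-\varepsilon)e^\beta\,dt$; the left-hand side integrates to $\mathrm{Ei}(w)$, and inversion yields the expression in the statement (with the $\log V$ normalization absorbed in the $V\to 0$ asymptotic regime).

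For the lower bound on $\overline{J^g}^*(\widetilde U,0)$, I would substitute the DBP power \eqref{eq_Popt} into \eqref{eq_hjb2} and apply Jensen's inequality on the convex map $x\mapsto x^+$. The resulting lower bound on $\mathbb E[g_{DBP}(\chi)|\widetilde U]$ decomposes into a drainage-proportional contribution and a residual that captures the expected circuit-power and CSIT-uncertainty terms evaluated at the initial queue state $\widetilde U_0$. The drainage-proportional piece solves the same type of log-drift ODE as above but with $\beta'=\mathbb E[(\log((1-\varepsilon)f))^+]$ in place of $\beta$, since only the regime where the DBP rate is strictly positive contributes to expected drainage; the outer $[\cdot]^+$ in \eqref{eq_Jglow} then ensures that the bound is non-trivial exactly when the circuit and CSIT residuals jointly dominate the average drainage term.

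The main obstacle I anticipate is the rigorous justification of the comparison step: the exact ODE for $\mathbb E[\widetilde U(t)]$ has a random $\log(\cdot)^+$ drift, whereas the bounding ODE uses a deterministic $\log(\cdot)$ drift, and one must verify via a Gronwall-type argument that replacing the nonlinearity in this direction preserves the trajectory inequality uniformly on $[0,T]$. Asymptotic tightness as $V\to 0$ then follows because $\log((1-\varepsilon)\widetilde U f/V)>0$ almost surely when $-\log V\to\infty$, so the slack introduced by dropping the positive-part operator vanishes in the stated regime.
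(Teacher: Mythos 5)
Your treatment of the unfinished-work bound \eqref{eq_JUup} is essentially the paper's argument: the paper likewise lower-bounds the conditional average DBP rate by pulling the expectation inside the positive part (it uses Jensen on $[\cdot]^+$, giving $\overline r_{low}(\widetilde U)=[n_F\log\widetilde U+n_F\beta]^+$, marginally tighter than your $[x]^+\ge x$ but leading to the same ODE), invokes the comparison principle on the VCTS dynamics \eqref{eq_ode} to get an upper-bound trajectory, and solves it in closed form via exactly your substitution $w=\log y+\beta$ and the exponential integral. Note that in the VCTS the drift is already the conditional expectation of the rate, so the trajectory is deterministic given $\widetilde U_0$ and no Gronwall/stochastic-comparison machinery is needed; the obstacle you flag does not actually arise.

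The energy bound \eqref{eq_Jglow} is where your proposal has a genuine gap. Jensen's inequality applied to $x\mapsto x^+$ in \eqref{eq_Popt} yields $\overline g_{DBP}(\widetilde U)\ge\big[\tfrac{\widetilde U n_F(1-\varepsilon)}{V}-\mathbb E[\tfrac{n_F}{f}]\big]^++P_{cct}$, which is a different expression from the integrand in \eqref{eq_Jglow}: the stated bound is an affine-in-$\widetilde U$ minorant with slope exactly $(1-\varepsilon)n_F/V$, \emph{anchored} so that it equals $\overline g_{DBP}(\widetilde U_0)$ at the initial state. The mechanism that produces it is not Jensen but a derivative bound: one computes $\tfrac{d\overline g_{DBP}}{d\widetilde U}=\tfrac{(1-\varepsilon)n_F}{V}\int_{V/(\widetilde U(1-\varepsilon))}^{\infty}q(x)\,dx\le\tfrac{(1-\varepsilon)n_F}{V}$, so the line of maximal slope through $(\widetilde U_0,\overline g_{DBP}(\widetilde U_0))$, clipped at zero, minorizes $\overline g_{DBP}$ for all $\widetilde U\le\widetilde U_0$. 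Your sketch names the right final shape (a drainage-proportional term plus a residual evaluated at $\widetilde U_0$) but the justification you offer would not produce it. You also need to make explicit the chain that brings $\beta'$ in: the \emph{upper} bound $\overline r_{up}$ on the rate (obtained from $(x+y)^+\le x^++y^+$, which is where the positive part inside the expectation, hence $\beta'$, comes from) gives a \emph{lower}-bound trajectory $\widetilde U_{low}(t)=y(t;\beta')$, and only then does monotonicity of the affine minorant give $\overline g_{low}(\widetilde U_{low}(t))\le\overline g_{low}(\widetilde U(t))\le\overline g_{DBP}(\widetilde U(t))$, which integrates to \eqref{eq_Jglow}. As written, your argument conflates the rate bound used for the queue trajectory with the bound on the power map itself.
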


Please refer to Appendix C for the proof.

\subsection{Delay Analysis for Deterministic Arrivals}

In this subsection, we establish the relationship between multiple per-period unfinished works and average end-to-end delay under the assumption of deterministic arrivals. Specifically, we assume the bit arrival $\mathcal B_m$ is deterministic (given by $B$). The average bit arrival rate (bits per seconds) is given by $\overline{B} = B/T$. Such an arrival model embraces VoIP as well as other delay-sensitive source models derived from {\em constant bit rate} (CBR) encoders \cite{eunkyung_kim_efficient_2007, narbutt_gauging_2006}. Let $\bar d(\Omega^*)$ and $\bar g(\Omega^*)$ be the average end-to-end delay and the average power consumption, respectively, under DBP policy $\Omega^*$ in actual discrete time systems. $\bar d(\Omega^*)$ and $\bar g(\Omega^*)$ are represented by the combination of per-period average unfinished work over multiple periods. {\red We first have the following results regarding the steady state leftover bits of the buffer in VCTS due to the accumulated arrivals in the previous arrival periods.
\begin{Lemma}[Steady State Leftover Bits for the VCTS]\label{lem_left} Let $L_m$ be the leftover bits of the buffer in VCTS at the end of the $m$-th arrival period. Given $L_{0}<L^*$, then we have $\sup_{m\in\mathbb{R}^+} L_m \leq L^*$, and $L^*>0$ satisfies the following fixed point equation:
\begin{equation}\label{eq_L}
{L^*}= {e^{ - \beta  + \text{Ei}^{( - 1)}\left[ {\text{Ei}\left( {\log (B + {L^*}) + \beta  } \right) -  (1 - {\varepsilon}){n_F}{e^{\beta  }}T} \right]}}
\end{equation}
Furthermore, the fixed point $L^*$ exists and is unique. ~\hfill\IEEEQED
\end{Lemma}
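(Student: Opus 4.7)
The plan is to recognise the sequence $\{L_m\}$ as the iterates of a deterministic one-period map and then run a monotone contraction argument. Within a single arrival interval, the VCTS queue $\widetilde U(t)$ satisfies the autonomous ODE \eqref{eq_ode}, whose mean-trajectory is the closed-form $y(t;\beta)$ exhibited in Theorem \ref{the_bounds}. A new batch of $B$ bits is injected at the start of each period, so starting from post-arrival level $\widetilde U_0 = B + L_{m-1}$ and running for time $T$ gives the recursion $L_m = \Phi(L_{m-1})$, where
\begin{equation*}
\Phi(L) \;=\; y(T;\beta)\Big|_{\widetilde U_0 = L+B}.
\end{equation*}
The fixed point equation \eqref{eq_L} is exactly $L^* = \Phi(L^*)$, so the lemma reduces to showing that $\Phi$ admits a unique positive fixed point and that iterates started strictly below it remain strictly below it.

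I would then establish two structural properties of $\Phi$. \emph{Monotonicity}: each building block of $y(T;\beta)$ ($\log$, $\text{Ei}$, $\text{Ei}^{(-1)}$, $\exp$) is strictly increasing on the relevant domain, so $\Phi$ is strictly increasing in $L$; this is the ODE comparison principle in disguise. \emph{Contractivity}: a chain-rule calculation using $\text{Ei}'(u)=e^u/u$, combined with the identities $e^{\log y+\beta} = y e^\beta$ and $e^{\log(L+B)+\beta} = (L+B)e^\beta$, collapses to the compact expression
\begin{equation*}
\Phi'(L) \;=\; \frac{\log y(T;\beta) + \beta}{\log(L+B) + \beta}.
\end{equation*}
Since within a period where the DBP rate is positive the queue strictly decreases, we have $y(T;\beta) < L+B$ and hence $0 < \Phi'(L) < 1$; that is, $\Phi$ is a strict contraction on the region of interest.

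Putting these together I would finish as follows. Continuity of $\Phi$, the boundary values $\Phi(0)>0$ (the VCTS queue cannot become negative) and $\Phi(L)-L \to -\infty$ as $L\to\infty$ (the implicit stability premise that the mean DBP service over one period eventually exceeds $B$), plus strict monotonicity of $G(L) \triangleq L-\Phi(L)$ inherited from $\Phi'<1$, produce via the intermediate value theorem a unique $L^*>0$ with $G(L^*)=0$; equivalently, Banach's fixed-point theorem on a closed interval containing $L^*$ yields the same conclusion. The sample-path bound is then immediate by induction: if $L_{m-1} < L^*$, then monotonicity and $\Phi(L^*)=L^*$ give $L_m = \Phi(L_{m-1}) < L^*$, so $L_0<L^*$ implies $\sup_{m} L_m \leq L^*$.

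The delicate step is the contractivity bound in the second paragraph: it tacitly assumes that $y(T;\beta) < L+B$, which needs the DBP rate in \eqref{eq_ropt2} to be strictly positive throughout the interval so that the un-clipped ODE applies. On any sub-interval where the $[\,\cdot\,]^+$ clipping activates, $\widetilde U$ simply freezes, which only makes $\Phi$ \emph{more} contractive rather than less, so the inequality $\Phi'(L)<1$ still holds; however, piecing the two regimes together cleanly is the main bookkeeping hurdle and is where I expect the proof to require the most care.
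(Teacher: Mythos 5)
Your proposal is correct and follows the paper's Appendix D in its core: you identify the same one-period map $\Phi(L)=\overline f(B+L)$ built from the closed-form trajectory $y(T;\beta)$, and your contractivity computation $\Phi'(L)=\bigl(\log y(T;\beta)+\beta\bigr)/\bigl(\log(L+B)+\beta\bigr)<1$ is exactly the paper's slope bound $d\overline f/dx<1$ (the paper writes the numerator as $\mathrm{Ei}^{(-1)}[\cdots]$, which equals $\log\overline f(x)+\beta$), and both arguments conclude existence and uniqueness of $L^*$ from monotonicity plus slope less than one. Where you genuinely diverge is the final step $\sup_m L_m\le L^*$: the paper subtracts the two $\mathrm{Ei}$ fixed-point relations and derives a sign contradiction from the resulting integral identity, whereas you use the standard monotone-map induction $L_{m-1}<L^*\Rightarrow \Phi(L_{m-1})<\Phi(L^*)=L^*$; your route is shorter and makes the role of monotonicity explicit, while the paper's integral argument avoids having to restate that $\Phi$ is order-preserving. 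One small imprecision to fix: $y(t;\beta)$ is not the exact VCTS mean trajectory but an upper bound on it, obtained from $\overline r_{DBP}\ge\overline r_{low}$ via Jensen's inequality (the paper accordingly writes $L_m\le\overline{l_m}$ with $\overline{l_m}=\overline f(B+\overline{l_{m-1}})$ rather than $L_m=\Phi(L_{m-1})$); this only strengthens your induction, since $L_m\le\Phi(L_{m-1})<\Phi(L^*)=L^*$, but the recursion should be stated as an inequality. Your observation that the $[\cdot]^+$ clipping only freezes the queue and hence only helps is also consistent with (and slightly more careful than) the paper's treatment.
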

}

Please refer to Appendix D for the proof. Based on Lemma \ref{lem_left}, the average delay and power consumption for deterministic arrival is given by:

\begin{Theorem}[Average Delay and Average Power Consumption for Deterministic Arrivals]\label{the_dpbound} {\red For sufficiently small $\Delta t$,} the discrete-time average end-to-end delay $\bar d(\Omega^*)$ and average power consumption $\bar g(\Omega^*)$ under the DBP $\Omega^*$ with deterministic arrivals are given by:
\begin{equation}\label{eq_dup}
\overline d \left( {{\Omega ^*}} \right) \le \,\frac{1}{{\overline B }}\frac{1}{T}{\overline {{J^U}} ^*}(B + {L^*},0)\,\, + \,\mathcal O\left( {\Delta t} \right)
\end{equation}
\begin{equation}\label{eq_plow}
\overline g \left( {{\Omega ^*}} \right) \ge \frac{1}{T}{\overline {{J^g}} ^*}(B ,0)\,\, + \,\mathcal O\left({\Delta t} \right)
\end{equation}
where $B$ is the number of bits of an arrival packet in each period. ~\hfill\IEEEQED
\end{Theorem}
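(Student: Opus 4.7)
The plan is to reduce the long-run averages $\bar d(\Omega^*)$ and $\bar g(\Omega^*)$ to per-arrival-period quantities using renewal reward theory, then invoke Lemma~\ref{lem_left} to uniformly bound the epoch-initial queue state and Theorem~\ref{the_1} to cross back from continuous time to the discrete queueing system. Because the arrivals are deterministic and periodic with period $T$ while the CSI is i.i.d. across slots, the sequence of arrival instants $\{mT\}_{m\ge 0}$ provides a natural embedding for the queueing process: the starting state of the $m$-th period equals $U_0^{(m)}=B+L_{m-1}$, where $L_{m-1}$ is the leftover backlog at the end of the previous period. Hence $\bar d(\Omega^*)$ and $\bar g(\Omega^*)$ can be expressed as Ces\`aro limits over $m$ of the per-period unfinished works $\overline{j^U}(B+L_{m-1})$ and energies $\overline{j^g}(B+L_{m-1})$ defined in (\ref{eq_jUdef})--(\ref{eq_jgdef}).

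For the delay upper bound I would apply Little's law, writing $\bar d(\Omega^*)=\frac{1}{\bar B T}\lim_{M\to\infty}\frac{1}{M}\sum_{m=1}^{M}\overline{j^U}(B+L_{m-1})$. Lemma~\ref{lem_left} gives $L_{m-1}\le L^*$ for every $m$, so once I establish monotonicity of $\overline{j^U}$ in the initial backlog (which follows from a sample-path coupling: the DBP rate $r_{DBP}$ in (\ref{eq_ropt2}) is monotone increasing in $U$, so two queues driven by the same CSI realizations stay ordered after reflection at zero), each term is dominated by $\overline{j^U}(B+L^*)$. Invoking Theorem~\ref{the_1} to replace $\overline{j^U}(B+L^*)$ by $\overline{J^U}^*(B+L^*,0)+\mathcal O(\Delta t)$ yields (\ref{eq_dup}).

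For the power lower bound, the same renewal decomposition gives $\bar g(\Omega^*)=\frac{1}{T}\lim_{M\to\infty}\frac{1}{M}\sum_{m=1}^{M}\overline{j^g}(B+L_{m-1})$. Since $L_{m-1}\ge 0$, the same coupling argument (now used in the opposite direction) together with the monotonicity of the DBP transmission power in $U$ in (\ref{eq_Popt}) yields $\overline{j^g}(B+L_{m-1})\ge \overline{j^g}(B)$. Substituting $\overline{j^g}(B)=\overline{J^g}^*(B,0)+\mathcal O(\Delta t)$ from Theorem~\ref{the_1} produces (\ref{eq_plow}).

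The main obstacle is making the sample-path monotonicity rigorous for both $\overline{j^U}$ and $\overline{j^g}$: one must show that coupling two DBP queues with different initial backlogs but a common CSI realization preserves the ordering throughout the inter-arrival period, including at slots where the lower queue hits the DBP transmission threshold and idles while the upper queue does not. This is delicate because an idling lower queue can locally close the gap, and must be handled by examining the water-level structure of (\ref{eq_Popt}) together with the reflection at zero in (\ref{eq_Ut}). A secondary concern is the uniformity of the $\mathcal O(\Delta t)$ remainder from Theorem~\ref{the_1} over the compact initial-state range $[B,B+L^*]$, which is required so that the remainder survives the Ces\`aro averaging.
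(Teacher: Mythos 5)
Your proposal follows essentially the same route as the paper's Appendix E: Little's law plus ergodicity to reduce the long-run averages to Ces\`aro averages of the per-period quantities $\overline{j^U}(U_0^m)$ and $\overline{j^g}(U_0^m)$, then Lemma~\ref{lem_left} to bound $U_0^m$ between $B$ and $B+L^*$, and Theorem~\ref{the_1} to pass to the VCTS quantities with an $\mathcal O(\Delta t)$ error. The monotonicity of $\overline{j^U}$ and $\overline{j^g}$ in the initial backlog, which you correctly flag as the delicate step, is used implicitly but not proved in the paper, so your explicit coupling discussion is a refinement rather than a divergence.
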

\begin{proof}
Please refer to Appendix E for the proof, {\red where the relationship between the discrete model and the VCTS in Theorem \ref{the_1} is utilized.}
\end{proof}
\begin{Corollary}[Asymptotic Power-Delay Tradeoff of DBP at Small Delay Regime]\label{cor_order} {\red For sufficiently small $V$}\footnote{\red $V$ is a parameter that determines the tradeoff between power and delay in the system. For a given data arrival rate, small $V$ corresponds to small delay regime.}, the asymptotic power-delay tradeoff of DBP  {\blue of the VCTS} is given by:
\begin{equation}\label{eq_dorde}
\overline d \left( {{\Omega ^*}} \right) =\mathcal O\left( {\frac{B^2}{{\log ({B }\mathbb{E}[f(\epsilon,\sigma_e^2,\mathbf{\hat{H}})]/V)}} + {\frac{V}{{\mathbb{E}[f(\epsilon,\sigma_e^2,\mathbf{\hat{H}})]}}}} \right)
\end{equation}
\begin{equation}\label{eq_gorde}
\overline g \left( {{\Omega ^*}} \right)= \Omega\left( {\left( {\frac{{{B }}}{V} + {P_{cct}}} \right)\frac{B}{{ \log ({B }\mathbb{E}[f(\epsilon,\sigma_e^2,\mathbf{\hat{H}})]/V)}}} \right)
\end{equation}
\end{Corollary}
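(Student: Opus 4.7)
The plan is to chain Theorem \ref{the_dpbound} with Theorem \ref{the_bounds} and Lemma \ref{lem_left}, reducing the corollary to the asymptotic evaluation of the two scalar integrals that bound $\overline{J^U}^{*}$ and $\overline{J^g}^{*}$ as $V\to 0$. First I would invoke Theorem \ref{the_dpbound} to write $\bar d(\Omega^{*})\le \frac{1}{\overline{B}T}\,\overline{J^U}^{*}(B+L^{*},0)+\mathcal{O}(\Delta t)$ and $\bar g(\Omega^{*})\ge \frac{1}{T}\,\overline{J^g}^{*}(B,0)+\mathcal{O}(\Delta t)$, and then apply Theorem \ref{the_bounds} to dominate $\overline{J^U}^{*}$ by $\int_{0}^{T}y(t;\beta)\,dt$ and to minorize $\overline{J^g}^{*}$ by the $[\,\cdot\,]^{+}$-integral in (\ref{eq_Jglow}). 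The corollary then reduces to evaluating these two scalar integrals in the small-$V$ regime.

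Next I would use Lemma \ref{lem_left} to pin down the steady-state leftover $L^{*}$. Since the DBP rate $r_{DBP}=n_F[\log((1-\varepsilon)Uf/V)]^{+}$ grows like $\log(1/V)$ while the arrival $B$ is held fixed, the fixed-point equation (\ref{eq_L}) forces $L^{*}\to 0$ as $V\to 0$; expanding (\ref{eq_L}) with $\mathrm{Ei}(x)\sim e^{x}/x$ at large $x$ shows $L^{*}$ is negligible compared to $B$, so the initial condition in $y(t;\beta)$ is effectively $\widetilde{U}_{0}\sim B$ at leading order. This justifies replacing $B+L^{*}$ by $B$ inside $\overline{J^U}^{*}$ without affecting the scaling.

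Third, I would asymptotically evaluate $\int_{0}^{T}y(t;\beta)\,dt$. The queue trajectory $y(t)$ decays from $B$ to essentially zero over a short emptying time $T_{0}$, which is characterized by inverting the $\mathrm{Ei}$-based expression using $\mathrm{Ei}(x)\sim e^{x}/x$ for $x$ large and large-negative arguments of $\mathrm{Ei}^{-1}$. This yields $T_{0}=\Theta\!\big(B/(n_F(1-\varepsilon)\log(B\,\mathbb{E}[f]/V))\big)$, where $e^{\beta}$ is identified with $\mathbb{E}[f]$ up to a Jensen gap absorbed into the $\mathcal{O}$-constant. Integrating $y$ over $[0,T_{0}]$ produces the dominant $B^{2}/\log(B\,\mathbb{E}[f]/V)$ contribution in (\ref{eq_dorde}), while a residual constant-level piece over $[T_{0},T]$ supplies the additive $V/\mathbb{E}[f]$ correction. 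The same emptying-time split applied to the power integrand in (\ref{eq_Jglow}), whose leading instantaneous value on the active interval is of order $B/V+P_{cct}$, then multiplies by the width $T_{0}$ to give the $\Omega$ scaling in (\ref{eq_gorde}) after dividing by $T$.

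The main obstacle is the delicate asymptotic handling of $\mathrm{Ei}$ and $\mathrm{Ei}^{-1}$ near the boundary where $y(t)$ approaches the empty-queue regime: one must split the integration domain around $T_{0}$ and verify that the subleading contributions from the two subintervals combine cleanly into the claimed scaling rather than cancelling or producing a larger hidden term. A secondary subtlety is that $\beta=\mathbb{E}[\log((1-\varepsilon)f)]$ and $\log\mathbb{E}[f]$ differ by a Jensen gap that must be absorbed into the implicit constants of the $\mathcal{O}$/$\Omega$ notation, and that the same $L^{*}\to 0$ argument must remain valid uniformly over the relevant regime so that the reduction $\widetilde{U}_{0}\sim B$ does not degrade the leading-order estimate.
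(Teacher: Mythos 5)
Your proposal follows essentially the same route as the paper's Appendix F: chain Theorem \ref{the_dpbound} with Theorem \ref{the_bounds}, use the fixed-point equation of Lemma \ref{lem_left} to get $L^{*}=\Theta(e^{-\beta})=\Theta(V/\mathbb{E}[f])$ (which is exactly the source of the additive $V/\mathbb{E}[f]$ term), and identify the emptying/transmission time $\Theta\!\left(B/\log(B\,\mathbb{E}[f]/V)\right)$ so that the two areas scale as claimed. The only point to tighten is the power lower bound: since the integrand decays over the active interval, an $\Omega$ estimate needs the convexity of $\overline g_{low}$ and the tangent-line triangle of height $\overline g_{low}(B)=\Theta(B/V+P_{cct})$ and base $t_p$ (as the paper does), rather than just "value times width," but this is a refinement of the same idea rather than a different argument.
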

\begin{proof}
Please refer to Appendix F for the proof.
\end{proof}

\begin{Remark}[Interpretation of Results] {\blue Note that from Theorem \ref{the_dpbound}, there is an additional $\mathcal O\left({\Delta t} \right)$ term in \eqref{eq_dup} and \eqref{eq_plow} accounting for the approximation error between the power/delay of the original discrete time system and the VCTS. Yet, to simplify discussion, we focus on the first-order comparisons from the power-delay tradeoff of the VCTS in Corollary \ref{cor_order}. }

\begin{itemize}
\item {\bf Comparison with CSIT-only policy:} The power-delay tradeoff result {\blue for the VCTS} in Corollary \ref{cor_order} is asymptotically accurate as $V\rightarrow 0$ and this corresponds to small delay regime. For a given CSIT quality and $P_{cct}$, the conditional average data rate {\red (conditioned on the queue state $\widetilde U(t)$)} for CSIT-only policy is given by:
\begin{equation}
{\overline r _{CSIT}} = \,E\left[ {{r_{CSIT}}(\widetilde{\bf{H}}(t))|\widetilde U(t)} \right] = \mathbb E\left[ {{n_F}{{\left\{ {\log \left( {\frac{{(1 - \varepsilon )f(\epsilon,\sigma_e^2,\mathbf{\widetilde{H}})}}{V}} \right)} \right\}}^ + }} \right].
\end{equation}
As a result, the power-delay tradeoff {\blue of the VCTS}  for CSIT-only policy \cite{berry_optimal_2006} at small delay regime\footnote{\red The delay expression for CSIT-only policy in \cite{berry_optimal_2006} 
{\blue is derived using a discrete time approach and is given by} $\overline{d}=\mathcal{O}(\frac{1}{\log(\overline{g})})+1$. {\blue On the other hand}, the result in this paper {\blue is  derived   using a continuous time approach (VCTS) and the delay is given by}  {\blue $\overline{d}=\mathcal O\left( \frac{1}{{\log (1/V)}} \right)+ \mathcal O(\Delta t)$ (for the actual discrete time system in terms of seconds).} Hence, they match each other when expressing in terms of seconds. } is given by $\overline g = \mathcal O(\exp(1/\overline d))$.  {\red On the other hand, since $\frac{1}{\log(1/V)}=\Omega(V)$, we have {\blue the delay and power of the VCTS given by} $
\overline d \left( {{\Omega ^*}} \right) = \mathcal O\left( \frac{1}{{\log (1/V)}} \right)$, and $
\overline g \left( {{\Omega ^*}} \right) = \Omega  \left( \frac{1 }{{ V \log (1/V)}} \right)$ from Corollary \ref{cor_order}.
Furthermore, since it is asymptotically accurate for small $V$ (small delay regime), we can conclude that the power-delay tradeoff for DBP {\blue in the VCTS} is given by $\overline g = \mathcal O(\overline d \exp(1/\overline d))$. Hence, compared with CSIT-only policy, the power consumption of DBP {\blue in the VCTS}  increases slower as delay $\overline d$ tends\footnote{{\blue Note that while the delay of the VCTS can go to zero as $V\to 0$, the delay of the actual discrete time system cannot go to zero and is given by $\mathcal O(\Delta t)$ when $V\to0$. This footnote applies to Corollary \ref{cor_orderr} as well.}}  to 0.} Furthermore, we could achieve this superior tradeoff performance even with imperfect CSIT quality and non-ideal circuit power $(P_{cct} > 0)$.

\item {\bf Effects of CSIT quality:} The penalty of CSIT quality is contained in {\red $\mathbb{E}[f(\epsilon,\sigma_e^2,\mathbf{\hat{H}})]$}, which appears in the coefficients of the tradeoff equations in (\ref{eq_dorde}) and (\ref{eq_gorde}). For a given target PER $\varepsilon$, a larger CSIT error corresponds to a smaller {\red $\mathbb{E}[f(\epsilon,\sigma_e^2,\mathbf{\hat{H}})]$}. Fig. \ref{fig_EF} illustrates $\mathbb{E}[F_{{\psi ^2};{s^2}}^{ - 1}( \varepsilon)]$ versus CSIT errors $\sigma_e^2$ for target PER $10^{-2}$, $10^{-3}$ and $10^{-4}$ {\red under the uniform power-delay profile}.

\item {\bf Effects of $P_{cct}$:} From (\ref{eq_gorde}), the average power consumption has two components, namely the transmission power and the circuit power. The term $P_{cct}/({{ \log ({B }f(\epsilon,\sigma_e^2,\mathbf{\hat{H}})/V)}}))$ corresponds to the circuit power consumption, which increases with the burst transmission time in one arrival period. For small $V$, the burst transmission time decreases in the order of $\mathcal O(1/({{ \log ({B }f(\epsilon,\sigma_e^2,\mathbf{\hat{H}})/V)}}))$.
~\hfill\IEEEQED
\end{itemize}
\end{Remark}

\begin{figure}[t!]
\centering
\includegraphics[width=.5\columnwidth]{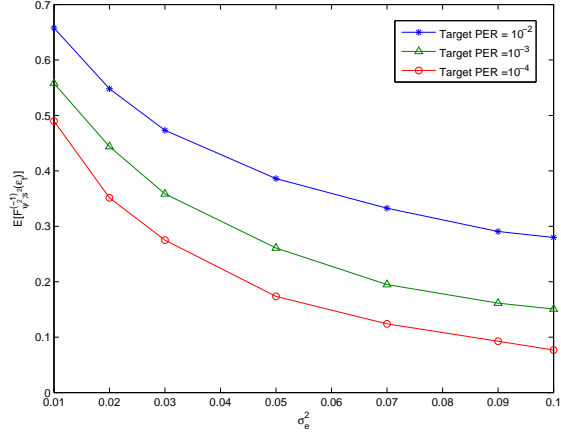}
\caption{Expectation of $F_{{\psi ^2};{s^2}}^{ - 1}( \varepsilon)$ versus CSIT error $\sigma^2_e$ at different target PER.} \label{fig_EF}
\end{figure}

\subsection{Delay Analysis for Random Arrivals}

In this subsection, we shall extend the analysis to i.i.d. random arrival process where the bit arrivals $\mathcal B_m\in [0, B_{max}]$ is generated by a general distribution $\Pr(\mathcal B)$. The main results are summarized below.

\begin{Theorem}[Performance Bound of Average Delay and Power Consumption for Random Arrivals]\label{the_rbound}{\red For sufficiently small $\Delta t$,} the discrete-time average end-to-end delay $\bar d_{iid}(\Omega^*)$ and average power consumption $\bar g_{iid}(\Omega^*)$ for DBP $\Omega^*$ under i.i.d. arrival process $\{\mathcal B_m\}$ are given by:
\begin{equation}\label{eq_dupr}
\overline d_{iid} \left( {{\Omega ^*}} \right) \le \,\frac{1}{{\overline B_{iid} }}\,\frac{1}{T}\mathbb{E}[{\overline {{J^U}} ^*}(\mathcal{B} + {L^*_{max}},0)]\,\, + \,\mathcal O\left( {{\Delta t}} \right)
\end{equation}
\begin{equation}\label{eq_plowr}
\overline g_{iid} \left( {{\Omega ^*}} \right) \ge \frac{1}{T} \mathbb{E}[{\overline {{J^g}} ^*}(\mathcal B \,,0)]+ \mathcal O\left( {{\Delta t}} \right)
\end{equation}
where $L_{max}^*$ is given by the fixed point of equation (\ref{eq_L}) (with $B=B_{max}$), ${\overline B_{iid}}$ is the average bit arrival rate (bits per seconds) given by ${{\bar B}_{iid}}= {\mathbb E}[{{\cal B}}]/T$ and the expectation is taken w.r.t. the i.i.d. arrival process $\{\mathcal B_m\}$. ~\hfill\IEEEQED
\end{Theorem}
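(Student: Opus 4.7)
The plan is to mirror the analysis of Theorem \ref{the_dpbound} but lift it to the random-arrival setting by (i) dominating every sample-path leftover trajectory by the worst-case fixed point $L^*_{max}$ of equation (\ref{eq_L}), (ii) replacing the single deterministic initial queue state $B$ by an expectation over the arrival law $\Pr(\mathcal B)$, and (iii) invoking renewal reward theory on the i.i.d.\ inter-arrival cycles. The $\mathcal O(\Delta t)$ slack in (\ref{eq_dupr})--(\ref{eq_plowr}) is then absorbed via Theorem \ref{the_1}, exactly as in the deterministic case.

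First I would establish a pathwise domination. Let $L_m$ be the VCTS leftover bits at the end of the $m$-th arrival period when driven by $\{\mathcal B_1,\ldots,\mathcal B_m\}$, and let $\widetilde L_m$ be the corresponding leftover for an auxiliary VCTS driven by the constant arrivals $\mathcal B_j \equiv B_{max}$. Since $\mathcal B_j \leq B_{max}$ a.s.\ and the drain rate $\mathbb E[r^*(\widetilde U, \widetilde{\mathbf H}) \mid \widetilde U]$ in (\ref{eq_ode}) is nondecreasing in $\widetilde U$ (directly from the $\log$-form of $r^*$ in (\ref{eq_ropt2})), a standard ODE comparison/coupling argument gives $L_m \leq \widetilde L_m$ pathwise. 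Lemma \ref{lem_left} then yields $\widetilde L_m \leq L^*_{max}$, hence $\sup_m L_m \leq L^*_{max}$ almost surely.

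On the $m$-th inter-arrival cycle, the VCTS restarts at queue state $\mathcal B_m + L_{m-1}$, so the definition (\ref{def_JU}) gives
\begin{equation*}
\mathbb E\!\left[\int_{mT}^{(m+1)T} \widetilde U(s)\, ds \,\Big|\, \mathcal B_m,\, L_{m-1}\right] \,=\, \overline{J^U}^*(\mathcal B_m + L_{m-1},\,0).
\end{equation*}
Monotonicity of $\overline{J^U}^*(\cdot,0)$ in its first argument combined with the a.s.\ bound $L_{m-1}\leq L^*_{max}$ gives the per-cycle upper bound $\overline{J^U}^*(\mathcal B_m + L^*_{max},0)$. Averaging over $M$ cycles and applying the strong law of large numbers to the i.i.d.\ sequence $\{\mathcal B_m\}$ yields a time-averaged queue length $\tfrac{1}{T}\mathbb E[\overline{J^U}^*(\mathcal B + L^*_{max},0)]$. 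A Little's law / sample-path renewal-reward argument identical to the one in Appendix E converts this time-average into the end-to-end delay by dividing by the arrival rate $\overline{B}_{iid}=\mathbb E[\mathcal B]/T$, producing (\ref{eq_dupr}).

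For the energy lower bound I would exploit the opposite direction of monotonicity: since $L_{m-1}\geq 0$ and $\overline{J^g}^*(\cdot,0)$ is nondecreasing in the initial queue, $\overline{J^g}^*(\mathcal B_m + L_{m-1},0) \geq \overline{J^g}^*(\mathcal B_m,0)$, and taking expectation over the i.i.d.\ cycle lengths gives (\ref{eq_plowr}). Theorem \ref{the_1} then absorbs the discrete-to-continuous mismatch into the $\mathcal O(\Delta t)$ terms in both bounds. The principal obstacle is the pathwise domination step, because $L_m$ depends on the entire history $\{\mathcal B_j\}_{j\leq m}$ and on the random channel $\widetilde{\mathbf H}(\cdot)$; care is needed to verify that the \emph{conditional-mean} drain rate in (\ref{eq_ode}) is monotone in $\widetilde U$ so that the deterministic ODE comparison principle actually transfers to the random setting, and that $L^*_{max}$ is approached monotonically from above by the dominating trajectory so that the uniform-in-$m$ bound holds. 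Once this is in place, the remaining work is routine monotonicity of $\overline{J^U}^*$, $\overline{J^g}^*$ and standard renewal-reward bookkeeping.
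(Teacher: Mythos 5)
Your proposal is correct and follows essentially the same route as the paper's Appendix G: reduce to the deterministic per-period machinery of Theorem \ref{the_dpbound}, bound the random leftover $\mathcal L$ above by $L^*_{max}$ via Lemma \ref{lem_left}, use monotonicity of $\overline{J^U}^*$ and $\overline{J^g}^*$ in the initial queue state (with $\mathcal L \ge 0$ for the power bound), and take the expectation over the i.i.d.\ arrival law. The only difference is that you spell out the pathwise domination of the random-arrival leftover by the $B_{max}$-driven trajectory, a step the paper simply declares ``obvious,'' and your comparison argument (trajectories of the autonomous VCTS ODE cannot cross) makes it rigorous.
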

\begin{proof}
Please refer to Appendix G for the proof.
\end{proof}
\begin{Corollary}[Asymptotic Power-Delay Tradeoff of DBP for Random Arrivals at Small Delay Regime]\label{cor_orderr}
{\red For sufficiently small $V$,} the asymptotic power-delay tradeoff of DBP {\blue in the VCTS} under random arrivals $\{\mathcal B_m\}$ is given by:
\begin{equation}\label{eq_dorder}
\overline d_{iid} \left( {{\Omega ^*}} \right) = \mathcal{O} \left( {\mathbb E\left[ {\frac{{{\mathcal B^2}}}{{\log ({\mathcal B }\mathbb{E}[F _{{\psi ^2};{s^2}}^{ - 1}({\varepsilon})]/V)}}} \right]} \right)
\end{equation}
\begin{equation}\label{eq_gorder}
\overline g_{iid} \left( {{\Omega ^*}} \right)  =\Omega \left( {\mathbb E\left[ {\left( {\frac{{{\mathcal B }}}{V} + {P_{cct}}} \right)\frac{\mathcal B}{{ \log ({\mathcal B }\mathbb{E}[F _{{\psi ^2};{s^2}}^{ - 1}({\varepsilon})]/V)}}} \right]} \right)
\end{equation}
\end{Corollary}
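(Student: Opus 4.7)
The plan is to reduce the random-arrival case to the deterministic-arrival case of Corollary \ref{cor_order} by performing the small-$V$ asymptotic analysis pointwise in the realized arrival $\mathcal B$, and then taking the outer expectation dictated by Theorem \ref{the_rbound}. The starting point is the pair of bounds \eqref{eq_dupr}--\eqref{eq_plowr}, into which I would substitute the explicit integral expressions for $\overline{J^U}^*$ and $\overline{J^g}^*$ from Theorem \ref{the_bounds}. This reduces the problem to estimating $\int_0^T y(t;\beta)\,dt$ with initial condition $\widetilde U_0 = \mathcal B + L^*_{\max}$ (for the delay bound) and the analogous power integral with $\widetilde U_0 = \mathcal B$ (for the power bound), both in the regime $V\to 0$, inside the expectation $\mathbb E_{\mathcal B}[\cdot]$.

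Next, I would carry out the small-$V$ asymptotics already developed in Appendix F for Corollary \ref{cor_order}, applied pointwise in $\mathcal B$. The key observation is that under DBP the VCTS queue trajectory $y(t;\beta)$ drains from $\widetilde U_0$ down to the cutoff level $V/((1-\varepsilon)\mathbb E[f(\varepsilon,\sigma_e^2,\mathbf{\hat H})])$ over a burst transmission time of order $\Theta(\widetilde U_0/\log(\widetilde U_0 \mathbb E[f]/V))$. Integrating the queue trajectory therefore yields $\overline{J^U}^*(\widetilde U_0,0) = \mathcal O(\widetilde U_0^2/\log(\widetilde U_0 \mathbb E[f]/V))$, and integrating the power trajectory yields $\overline{J^g}^*(\widetilde U_0,0) = \Omega((\widetilde U_0/V + P_{cct})\widetilde U_0/\log(\widetilde U_0 \mathbb E[f]/V))$. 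Replacing $f$ by its asymptotic form $F_{\psi^2;s^2}^{-1}(\varepsilon)$ via Lemma \ref{lem_csit} furnishes the per-realization bounds that, once divided by $T$ and $\overline B_{iid}$, match the integrands in \eqref{eq_dorder} and \eqref{eq_gorder}.

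The final step is to pull the $\mathcal O$ and $\Omega$ estimates inside the outer expectation $\mathbb E_{\mathcal B}[\cdot]$. Since $\mathcal B$ is supported on the bounded interval $[0,B_{\max}]$, the constants in the pointwise estimates are themselves bounded by quantities depending only on $B_{\max}$, so a dominated-convergence argument (or a direct uniform bound) legitimizes the exchange. The main obstacle I anticipate is the treatment of $L^*_{\max}$ in the delay bound: by Lemma \ref{lem_left}, $L^*_{\max}$ is defined as the fixed point of \eqref{eq_L} with $B=B_{\max}$, and I would need a monotonicity/implicit-function argument to show that $L^*_{\max}$ remains bounded (indeed shrinks) as $V\to 0$, so that $\mathcal B + L^*_{\max}$ can be absorbed into the $\mathcal O$-constant of $\mathbb E[\mathcal B^2/\log(\mathcal B \mathbb E[F_{\psi^2;s^2}^{-1}(\varepsilon)]/V)]$ without altering the scaling exponent. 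Once this uniform control over $L^*_{\max}$ is in place, the rest of the argument is a direct substitution into the asymptotic expansion already established for deterministic arrivals.
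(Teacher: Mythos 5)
Your proposal is correct and follows essentially the same route as the paper, whose proof of this corollary is the single sentence that the result follows from Theorem \ref{the_rbound} by the same argument as Corollary \ref{cor_order}; you simply fill in the details (pointwise application of the Appendix F asymptotics in $\mathcal B$, followed by the outer expectation). Your two flagged technical points are also consistent with the paper: Appendix F already establishes $L^*=\Theta(V/\mathbb{E}[f(\epsilon,\sigma_e^2,\mathbf{\hat{H}})])\to 0$ as $V\to 0$ (hence $L^*_{\max}$ is controlled), and the bounded support $[0,B_{\max}]$ justifies exchanging the asymptotic estimates with $\mathbb E_{\mathcal B}[\cdot]$.
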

\begin{proof}The above results can be obtained in a similar way as Corollary \ref{cor_order} based on Theorem \ref{the_rbound}.
\end{proof}

\section{Simulation Results and Discussions}

In this section, we shall compare our proposed DBP with two reference baselines, namely the CSIT-only control policy (baseline 1) and No-CSIT policy (baseline 2). The baseline 1 policy allocates the rate and power to optimize the PHY throughput based on CSIT only. The baseline 2 policy always transmit with uniform power and fixed rate. In the simulation, we consider both deterministic and random arrivals. The OFDM systems has 1024 subcarriers with total bandwidth 10MHz. The scheduling slot duration $\Delta t$ is 5msec. {\red We simulate $10^6$ scheduling slot to evaluate the average power and delay for different parameter $V$. The dashed lines that pass through the simulation point for DBP algorithm represents the analytical results in \eqref{eq_dorde} and \eqref{eq_gorde}.}

\begin{figure}[t!]
\centering
\includegraphics[width=.5\columnwidth]{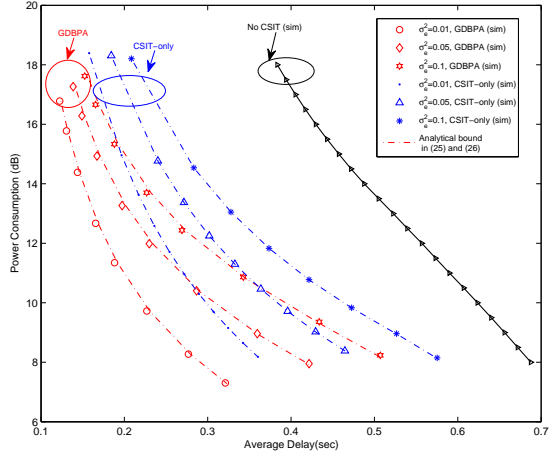}
\caption{{\red Power-delay tradeoff for OFDM link at different CSIT errors for deterministic arrival. CSIT error variance ($\sigma_e^2 = 0.01, 0.05, 0.1$), traffic loading $\bar B$ = 1k nats/slot, $T$=100msec, $\Delta t$ = 5msec, $P_{cct}$ = 0\%, $n_F$=1024, $N_d$ =16 and target PER =0.01. Note that $V$ is a parameter that determines the tradeoff between power and delay. For example, $V=1,2,4, 6,10, 15,25,40$ at the marks along the DBP curves. } \black } \label{fig_pd_sig} \black
\end{figure}

\black

Fig. \ref{fig_pd_sig} illustrates the power-delay tradeoff at different CSIT errors $\sigma^2_e=0.01,0.05,0.1$. It can be observed that DBP simulation results match the performance bounds in Theorem \ref{the_dpbound} quite closely. In addition, it is obvious that DBP has significant gain compared with the CSIT-only policy and No-CSIT policy. {\red It can be observed that the simulation points match with the analytical results very well for small $V$ (which corresponds to small delay regime).} As the CSIT error $\sigma^2_e$ gets smaller, the power-delay curve has steeper slope, which means better tradeoff. The performance gap between DBP and other policies increases at small delay and small CSIT error regime.

%

\begin{figure}
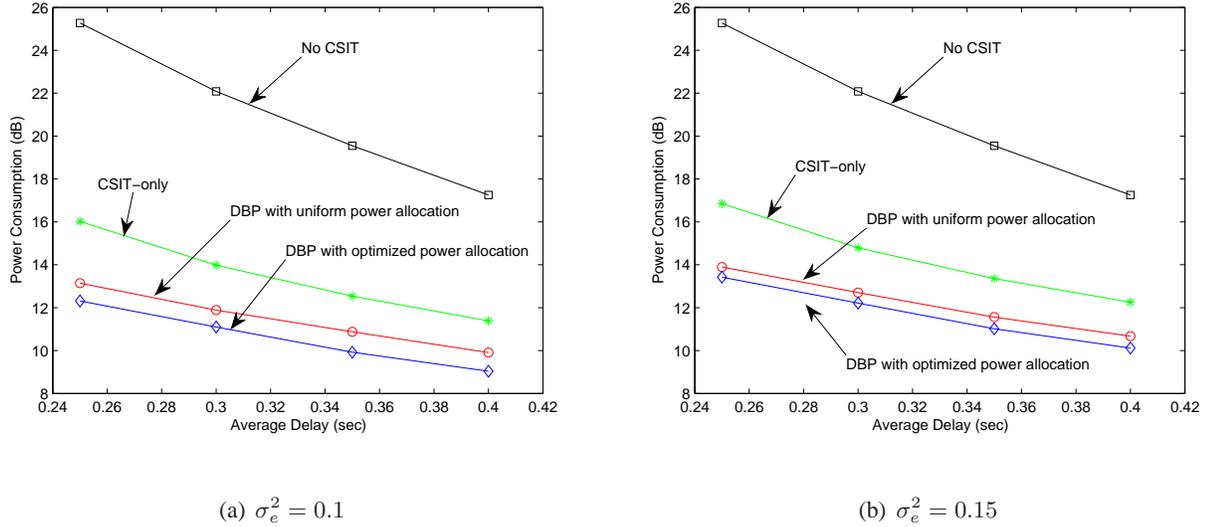

\begin{center}
  \subfigure[ $\sigma_e^2=0.1$ ]
  {\resizebox{8.4cm}{!}{\includegraphics{power_allocation_revision_1.eps}}}
  \subfigure[ $\sigma_e^2=0.15$]
  {\resizebox{8.4cm}{!}{\includegraphics{power_allocation_revision_3.eps}}}
  \end{center}
    \caption{ \red Average Delay versus power consumption at different CSIT errors for deterministic arrival: traffic loading $\bar B$ =  1k nats/slot, $T$=100msec, $\Delta t$ = 5msec, $P_{cct}$ = 0, $n_F$=1024, $N_d$ =16, and target PER =0.01. }
    \label{fig_gradient} \black
\end{figure}

\black


\begin{figure}[h!]
\centering
\includegraphics[width=.5\columnwidth]{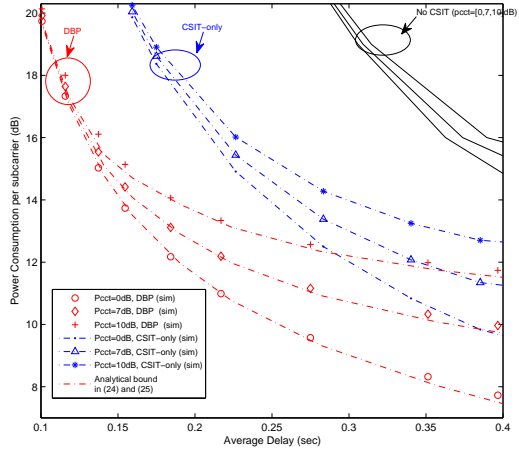}
\caption{{\red Power-delay tradeoff for OFDM link at different $P_{cct}$ for deterministic arrival. Traffic loading $\overline  B$ =  0.8k nats/sec, $T$=100msec, $\Delta t$ = 5msec, $P_{cct}$ = [0,7,10]dB, $\sigma^2_e=0.05$, $n_F$=1024, $N_d$ =16 and target PER =0.01. Note that $V$ is a parameter that determines the tradeoff between power and delay. For example, $V=0.5,1,2,4, 6,10, 15,25,40$ at the marks along the DBP curves. } \black} \label{fig_pd_pcct} \black
\end{figure}

\black

{\red Fig. \ref{fig_pd_pcct} shows that power-delay tradeoff with different circuit power consumption $P_{cct}$. It can be observed that the effect of $P_{cct}$ is significant when $P_{cct}$ is non-negligible from the total power consumption, especially in the large delay regime. Similarly, the DBP has significant gain compared with the CSIT-only policy and No-CSIT policy.}


%

\begin{figure}
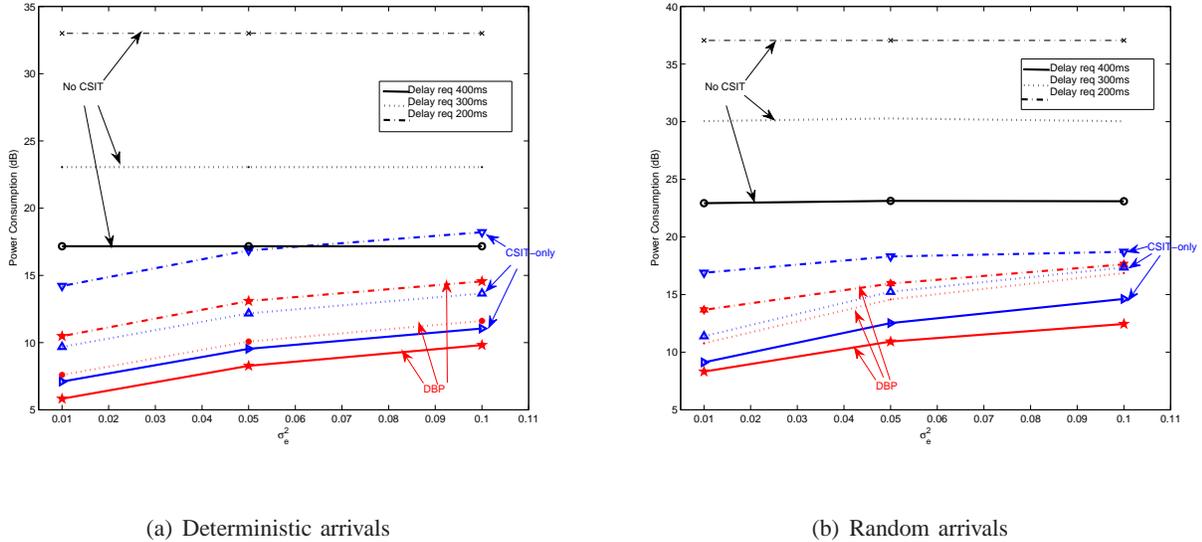

\begin{center}
  \subfigure[ Deterministic arrivals ]
  {\resizebox{8.4cm}{!}{\includegraphics{fig_psig_revision.eps}}}
  \subfigure[ Random arrivals ]
  {\resizebox{8.4cm}{!}{\includegraphics{fig_psig_r_revision.eps}}}
  \end{center}
    \caption{{\red Power consumption versus CSIT errors at different delay requirements: (a) Deterministic arrivals; (b) Random arrivals, traffic loading $\bar B$ =  10k nats/slot, $T$=100msec, $\Delta t$ = 5msec, $P_{cct}$ = 10\%, $n_F$=1024, $N_d$ =16, and target PER =0.01.} \black }
    \label{fig_psig} \black
\end{figure}

\black


Fig. \ref{fig_psig} illustrates the power consumption versus CSIT errors at different delay requirements. It can be seen that when the CSIT error $\sigma^2_e$ increases, the minimum required power for satisfying the delay requirement increases. In addition, the performance gain of the DBP decreases as the CSIT error increases.

\section{Conclusion}
In this paper, we consider a tradeoff of power-delay in point-to-point OFDM systems with imperfect CSIT and non-ideal circuit power. Using Lyapunov optimization framework, we derive a dynamic backpressure algorithm (DBP), which adapts the rate and power based on the instantaneous CSIT and QSI. To study how the CSIT quality and circuit power affects the power-delay tradeoff, we introduce a {\em virtual continuous time system} and derived an asymptotically accurate the power-delay bounds at small delay regime. We show that despite imperfect CSIT quality and non-ideal circuit power, the average power of the DBP policy scales with delay $(D)$ as $\mathcal O(D \exp(1/D))$. The impact of CSIT quality and circuit power appears in the coefficients of the scaling law.


\appendices

\section*{Appendix A: Proof of Lemma \ref{lem_hjb}}\label{app_hjb}
{\red In order to obtain the continuous-time area under the queue trajectory in the VCTS, we shall use the principle of {\em divide-and-conquer} from the definition of $\overline {{J^U}} (\widetilde U,t)$ in \eqref{def_JU} and $\overline {{J^g}} (\widetilde U,t)$ in \eqref{def_Jg}. Specifically, we have}
\[\overline {{J^U}} \left( {\widetilde U,\,t} \right){\mkern 1mu}  = {\mkern 1mu} {\mkern 1mu} \mathbb E\left[ {\int_t^{t + {t_\Delta }} {\widetilde U(s){\mkern 1mu} ds + \overline {{J^U}} ({{\widetilde U}},t + t_\Delta){\mkern 1mu} {\mkern 1mu} \left| {\,\widetilde U(t)} \right.} } \right]{\mkern 1mu} {\mkern 1mu} \]

Using Taylor expansion $\overline {{J^U}} ({\widetilde U},t + {t_\Delta }) = \overline {{J^U}} (\widetilde U,t) - \frac{{\partial \overline {{J^U}} }}{{\partial \widetilde U}}(1 - \varepsilon ){r^*}( \cdot ){t_\Delta }\, + \,\,\frac{{\partial \overline {{J^U}} }}{{\partial t}}{t_\Delta }\,\, + \mathcal O({t_\Delta }^2)$. For small value of $t_\Delta$, $\widetilde U(s)$ is assumed to be fixed to $\widetilde U(t)$ for $s\in[t,t+t_\Delta]$. Thus, removing $\overline {{J^U}} (\widetilde U,t)$ on both sides, we have
\[0 = \mathbb E\left[ {\widetilde U(t){t_\Delta } - \frac{{\partial \overline {{J^U}} }}{{\partial \widetilde U}}(1 - \varepsilon ){r^*}( \cdot ){t_\Delta }\, + \,\,\frac{{\partial \overline {{J^U}} }}{{\partial t}}{t_\Delta } + \mathcal O({t_\Delta }^2)\,\,\,\left| {\widetilde U(t)} \right.} \right].\,\,\]

Dividing by $t_\Delta$ and taking the limit $t_\Delta \rightarrow 0$ gives (\ref{eq_hjb1}). (\ref{eq_hjb2}) is also obtained by a similar procedure from the definition of $\overline {{J^g}} (\widetilde U,t)$ in \eqref{def_Jg}.

\section*{Appendix B: Proof of Theorem \ref{the_1}}\label{app_the_1}
{\red Similar to Appendix A, using the principle of divide and conquer from the definition of $\overline{j^U}(U_0)$ in (\ref{eq_jUdef}) and $\overline{j^g}(U_0)$ in (\ref{eq_jgdef}), we shall have the following recursive equations in discrete-time systems:
\begin{equation}\label{jU_rec}
\overline {j_{}^U} \left( {{U_k}} \right) = \mathbb E\left[ {{U_k}\Delta t\,\, + \,\,\,\overline {j_{}^U} \left( {{U_k}\,\, - \,\,{r^*}(k)(1 - {\varepsilon})\Delta t} \right)\left| {{U_k}} \right.} \right]\,,\,\,k = 0,\, \cdots \,,\frac{T}{\Delta t} - 1
\end{equation}
\begin{equation}\label{jg_rec}
\overline {j_{}^g} \left( {{U_k}} \right) = \mathbb  E\left[ {g\left( {r(k)\,\,,\,\chi (k)} \right)\Delta t\,\, + \,\,\,\overline {j_{}^g} \left( {{U_k}\,\, - \,\,{r^*}(k)(1 - {\varepsilon})\Delta t} \right)\left| {{U_k}} \right.} \right]\,,\,\,k = 0,\, \cdots \,,\frac{T}{\Delta t}- 1
\end{equation}
where $r^*(k)$ is given by the DBP in (\ref{eq_ropt2}). Since $\{\overline {j^U},\overline {j^g}\}$ and $\{\overline {J^U},\overline {J^g}\}$ satisfy the discrete time and continuous time recursive equations, respectively, we only need to show they are different in $\mathcal{O}(\Delta t)$.

We then discuss the property of $\overline {J^U}^*(\widetilde U,t)$.} Note that $\overline {J^U}^*(\widetilde U,t)$ satisfies the PDE in (\ref{eq_hjb1}), multiplying $\Delta t$ in both sides of (\ref{eq_hjb1}), we have:
\begin{equation}\label{eq_PDEd}
\mathbb E\left[ {\widetilde U(t)\Delta t + \,\,\frac{{\partial \overline {{J^U}} }}{{\partial \widetilde U}}\left\{ { - {r^*}( \cdot )(1 - \varepsilon )} \right\}\Delta t + \,\frac{{\partial \overline {{J^U}} }}{{\partial t}}\Delta t\,\,\left| {\widetilde U(t)} \right.} \right] = 0.
\end{equation}

For simplicity, let $\overline {{r^*}}  = \mathbb E[{r^*}(\widetilde U(t),\,\,\widetilde {\bf H}(t))|\widetilde U(t)]\,$. Since $\overline {{r^*}}=0$ if $\widetilde U(t)=0$, by Taylor expansion on $\overline {{J^U}} (\widetilde U - \overline {{r^*}} (1 - \varepsilon )\Delta t,\,\,t + \Delta t)$, we have:
\begin{equation}\label{eq_tal}
 \begin{array}{l}
 \overline {{J^U}} (\widetilde U - \overline {{r^*}} (1 - \varepsilon )\Delta t,t + \Delta t) - \overline {{J^U}} (\widetilde U,t)= \frac{{\partial \overline {{J^U}} }}{{\partial \widetilde U}}( - \overline {{r^*}} (1 - \varepsilon ))\Delta t + \frac{{\partial \overline {{J^U}} }}{{\partial t}}\Delta t + \mathcal O(\Delta {t^2}).
 \end{array}
\end{equation}

By substituting (\ref{eq_tal}) into (\ref{eq_PDEd}):
\begin{equation}\label{eq1}
\overline {{J^U}} (\widetilde U\,,t) = \,\,\mathbb E\left[ {\widetilde U(t)\Delta t + \,\,\overline {{J^U}} (\widetilde U - \overline {{r^*}} (1 - \varepsilon )\Delta t,\,\,t + \Delta t)\,\left| {\widetilde U(t)} \right.} \right] - \,\,\mathcal O(\Delta {t^2}).
\end{equation}

{\red Let $t=k \Delta t$, $\widetilde U(k\Delta t)=\widetilde U_k$ and $\overline {J_{k}^U} ({\widetilde U_k}) = \,\,\overline {{J^U}} (\widetilde U\,,k\Delta t)$, {\red i.e., sampling at time $t=k \Delta t$},
\[\overline {J_k^U} ({\widetilde U_k}) = \,\mathbb E\left[ {{{\widetilde U}_k}\Delta t + \,\,\overline {J_{k + 1}^U} ({{\widetilde U}_{k + 1}})\,\left| {{{\widetilde U}_k}} \right.\,} \right] - \,\,\mathcal O(\Delta {t}^2).\]

{\red Compare to the discrete time recursive equation in (\ref{jU_rec}),} $\overline {J_k^U} (\widetilde U_0,0)$ satisfies it up to $\mathcal O(N\Delta t^2)=\mathcal{O}(\Delta t)$. As a result, the solution of continuous time PDE $\overline {J^U}^*(U_0,0)$ will be different from the actual $\overline{j^{U}}(U_0)$ by at most $\mathcal O(\Delta {t})$, i.e. $\overline{j^{U}}(U_0)= \overline {J^U}^*(U_0,0) + \mathcal O(\Delta t)$. The case of the average energy consumption is obtained similarly. }

\section*{Appendix C: Proof of Theorem \ref{the_bounds}}\label{app_bounds}
{\red In the proof, we shall fist derive the bounds for the average departure rate and power consumption of the DBP policy, which is asymptotically tight at small delay regime (small $V$). Based on these bounds, we can derive the upper and lower bound of the queue trajectory and the corresponding bound of the average unfinished work and energy consumption.

First of all, we have the following lemma on the conditional average rate and power of the DBP policy.}
\begin{Lemma}[Bounds on Conditional Average Policy]\label{lem_approx}
The conditional average rate and transmission power of DBP can be bounded by:
\begin{equation}\label{eq_raprox}
{\overline r _{DBP}}(\widetilde U(t)) \buildrel \Delta \over = \mathbb E\left[ {{r_{DBP}}(\chi )|\widetilde U(t)} \right] \in \,\left[ {{{\overline r }_{low}}(\widetilde U(t)),\,\,{{\overline r }_{up}}(\widetilde U(t))} \right]
\end{equation}
\begin{equation}\label{eq_gaprox}
{\overline g _{DBP}}(\widetilde U(t)) \buildrel \Delta \over = \mathbb E\left[ {{g_{DBP}}(\chi )|\widetilde U(t)} \right] \ge \,\,{\overline g _{low}}(\widetilde U(t))
\end{equation}
where \[{\overline r _{low}}(\widetilde U(t)) = \mathbb E{\left[ {{n_F}\log \left\{ {\frac{{\widetilde U{(t)}(1 - {\varepsilon})f(\epsilon,\sigma_e^2,\mathbf{\hat{H}})}}{V}} \right\}\left| {\widetilde U(t)} \right.} \right]^ + } = {\left[ { {n_F}\log (\widetilde U(t)) + {n_F}\beta } \right]^ + },\]
\begin{equation}\label{eq_rup}
{\overline r _{up}}(\widetilde U(t)) = {n_F}\mathbb E\left[ {{{\left( {\log \widetilde U{(t)}} \right)}^ + } + {{\left( {\log \frac{{(1 - {\varepsilon})f(\epsilon,\sigma_e^2,\mathbf{\hat{H}})}}{V}} \right)}^ + }\left| {\widetilde U(t)} \right.} \right] =  {n_F}{\left[ {\log (\widetilde U(t))} \right]^ + } + {n_F}\beta ',
\end{equation}
and $\overline g_{low}(\widetilde U(t)) = {\left[ {\frac{{{{\widetilde U}(t) }(1 - {\varepsilon}){n_F}}}{V} + \overline g_{DBP}(\widetilde U(t_0)) - \frac{{{{\widetilde U}}(t_0) (1 - {\varepsilon}){n_F}}}{V}} \right]^ + }$. ~\hfill\IEEEQED
\end{Lemma}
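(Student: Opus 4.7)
The plan is to establish the three inequalities separately: the two rate bounds follow directly from standard inequalities applied to the DBP rate formula, while the power lower bound requires a convexity argument combined with the monotonicity of the VCTS queue trajectory.

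For the lower bound in \eqref{eq_raprox}, I would start from the DBP rate ${r_{DBP}}(\chi ) = {n_F}{[\log \{ \widetilde U(1 - \varepsilon )f/V\} ]^ + }$, condition on $\widetilde U(t)$ so the only randomness is through $f = f(\varepsilon, \sigma_e^2, \hat{\mathbf H})$, and apply Jensen's inequality to the convex function $x \mapsto (x)^+$, which yields $\mathbb E[(X)^+] \geq (\mathbb E[X])^+$. Using $\mathbb E[\log((1 - \varepsilon) f)] = \beta$ then delivers the stated form $\bar r_{low}(\widetilde U(t)) = [n_F \log \widetilde U(t) + n_F \beta]^+$ (with the constant $-n_F \log V$ absorbed as in the lemma). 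For the upper bound I would use the pointwise subadditivity $(a+b)^+ \leq (a)^+ + (b)^+$ to split $\log \widetilde U$ from $\log((1-\varepsilon) f / V)$ inside the positive-part, take the conditional expectation, and apply the definition $\beta' = \mathbb E[(\log((1 - \varepsilon) f))^+]$ to conclude.

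The power lower bound \eqref{eq_gaprox} is the main obstacle, because $g_{DBP}$ contains both a convex transmission-power contribution $(\widetilde U \alpha - n_F/f)^+$ with $\alpha = (1-\varepsilon)n_F/V$ and a discontinuous indicator-weighted circuit-power term $P_{cct}\,\mathbf 1\{\widetilde U \alpha > n_F/f\}$, which is not individually convex in $\widetilde U$. My plan is to treat the bound as a tangent-line estimate anchored at the reference time $t_0$: first I would prove the elementary pointwise inequality $(X+Y)^+ \geq (X)^+ + Y$ whenever $Y \leq 0$ by a short case analysis, then apply it with $X = \widetilde U(t_0) \alpha - n_F/f$ and $Y = (\widetilde U(t) - \widetilde U(t_0))\alpha$, which is non-positive because the VCTS queue trajectory in \eqref{eq_ode} is monotonically decreasing between arrivals. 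Taking conditional expectations on both sides and reassembling the circuit-power contribution at $t_0$ reconstructs $\bar g_{DBP}(\widetilde U(t_0))$ on the right-hand side.

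The delicate step will be handling the $P_{cct}$ indicator consistently across times $t$ and $t_0$: since $\Pr[\widetilde U(t)\alpha > n_F/f] \leq \Pr[\widetilde U(t_0)\alpha > n_F/f]$ when $\widetilde U(t) \leq \widetilde U(t_0)$, dropping the $P_{cct}$ contribution at time $t$ only weakens the inequality in the direction we want, and combined with the non-negativity of $\bar g_{DBP}$ this justifies taking $[\cdot]^+$ at the end to obtain the stated $\bar g_{low}$. Equivalently, the result can be read as the tangent-line lower bound for the convex map $\widetilde U \mapsto \mathbb E[(\widetilde U \alpha - n_F/f)^+]$ at $\widetilde U(t_0)$, where replacing the true subgradient (which lies in $[0,\alpha]$) by the upper estimate $\alpha$ weakens the inequality precisely because $\widetilde U(t) - \widetilde U(t_0) \leq 0$.
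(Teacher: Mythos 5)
Your two rate bounds follow the paper's proof exactly: the lower bound is Jensen's inequality applied to the convex map $x\mapsto (x)^+$ (giving $\mathbb E[(X)^+]\ge(\mathbb E[X])^+$), and the upper bound is the subadditivity $(a+b)^+\le a^+ + b^+$ followed by the definitions of $\beta$ and $\beta'$. For the power bound \eqref{eq_gaprox} you take a genuinely different but equivalent route. The paper writes $\overline g_{DBP}(\widetilde U)$ as an integral against the density of $f(\varepsilon,\sigma_e^2,\hat{\mathbf H})$, shows $\frac{d\overline g_{DBP}}{d\widetilde U}\le \frac{(1-\varepsilon)n_F}{V}$, and integrates this Lipschitz bound from the anchor $\widetilde U(t_0)$ via an auxiliary ODE. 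You instead prove the pointwise inequality $(X+Y)^+\ge X^+ + Y$ for $Y\le 0$ and take conditional expectations, which is the same Lipschitz statement obtained without differentiating under the integral sign; it is arguably cleaner and needs no regularity of the distribution of $f$. Both arguments correctly use the monotone decrease of the VCTS trajectory to ensure $\widetilde U(t)\le\widetilde U(t_0)$, and both append $\overline g_{DBP}\ge 0$ to justify the outer $[\cdot]^+$.

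One caveat on your ``delicate step.'' If $P_{cct}$ is an unconditional additive constant (which is how the paper's own first line $\mathbb E[(\cdot)^+ + P_{cct}\,|\,\widetilde U(t)]$ reads), the circuit term simply cancels from both sides of your inequality and no further argument is needed. If instead $P_{cct}$ is gated by the transmit indicator as in \eqref{eq_Popt}, then the monotonicity you invoke, $\Pr[\widetilde U(t)\alpha> n_F/f]\le \Pr[\widetilde U(t_0)\alpha> n_F/f]$, points the \emph{wrong} way: the quantity you must reconstruct on the right-hand side is $\overline g_{DBP}(\widetilde U(t_0))$, which contains the \emph{larger} circuit contribution $P_{cct}\Pr[\widetilde U(t_0)\alpha>n_F/f]$, so dropping the smaller contribution at time $t$ leaves a deficit rather than a surplus, and the tangent bound with slope $\alpha$ does not absorb it. This is not a defect you introduced --- the paper's derivative computation silently drops the same boundary term arising from the moving indicator --- but your proposed justification for that step does not close it; the clean fix is to adopt the unconditional-$P_{cct}$ reading, under which your proof is complete.
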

\begin{figure}[t!]
\centering
\includegraphics[width=.5\columnwidth]{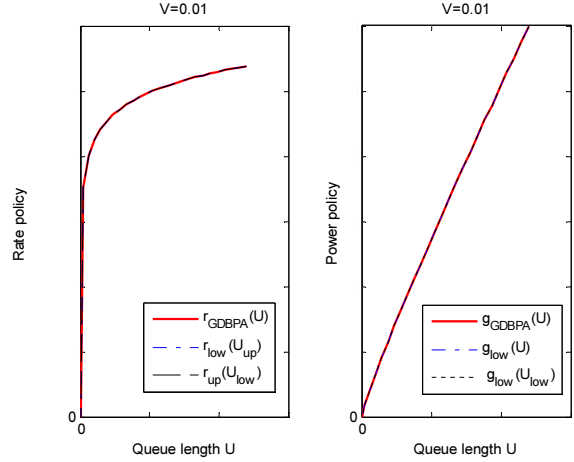}
\caption{Comparison of the actual and approximated rate and power control policies.} \label{fig_Vacc}
\end{figure}

As a result, we can approximate $\overline{r}_{DBP}$ and $\overline{g}_{DBP}$ using $\overline {r}_{low}(U)$ and $\overline{g}_{low}(U)$ with asymptotically small approximation errors at small $V$. Fig. \ref{fig_Vacc} illustrates the accuracy of the approximation.
\begin{proof} {\red The key proof is to find the bound for the average departure rate. Specifically,
from (\ref{eq_ropt2}), using Jensen's inequality yields following inequality, we have the lower bound:}
\begin{equation}\label{eq_rvct}
{\overline r _{DBP}}\left( {\widetilde U(t)} \right) = \mathbb E\left[ {{{\left( {{n_F}\log \left\{ {\frac{{\widetilde U (t)(1 - {\varepsilon})f(\epsilon,\sigma_e^2,\mathbf{\hat{H}})}}{V}} \right\}} \right)}^ + }\left| {\widetilde U(t)} \right.} \right] \ge \,\,{\overline r _{low}}\left( {\widetilde U(t)} \right).
\end{equation}
Similarly, using the fact that $(x + y)^+ \leq (x^+ + y^+)$, {\red we have the upper bound}:
\[{\overline r _{DBP}}\left( {\widetilde U(t)} \right) = {n_F}\mathbb E\left[ {{{\left( {\log \left\{ {\widetilde U(t)} \right\} + \log \left\{ {\frac{{(1 - {\varepsilon})f(\epsilon,\sigma_e^2,\mathbf{\hat{H}})}}{V}} \right\}} \right)}^ + }\left| {\widetilde U(t)} \right.} \right] \le {\overline r _{up}}(\widetilde U(t)).\]

{\red Now, we shall show the lower bound of ${\overline g _{DBP}}(\widetilde U(t))$.} Let ${f(\epsilon,\sigma_e^2,\mathbf{\hat{H}})}=X$ and $q(x)$ to be the pdf of the random variable $X$, where the randomness is induced by the CSIT $\widehat{\bf{H}}$. From (\ref{eq_Popt}), we have:
\[{\overline g _{DBP}}(\widetilde U(t)) = \mathbb E\left[ {{{\left( {\frac{{\widetilde U(t){n_F}(1 - {\varepsilon})}}{V} - \frac{{{n_F}}}{X}} \right)}^ + } + {P_{cct}}\left| {\widetilde U(t)} \right.} \right] = \int_{\frac{V}{{{{\widetilde U} }(1 - {\varepsilon})}}}^\infty  {\left( {\frac{{{{\widetilde U} }{n_F}(1 - {\varepsilon})}}{V} - \frac{{{n_F}}}{X} + {P_{cct}}} \right)} q(x)dx.\]

$\overline g _{DBP}(\widetilde U(t))$ is a monotonic increasing function of $\widetilde U(t)$. Differentiating both sides w.r.t. $\widetilde U$:
\[\frac{{d{{\overline g }_{DBP}}(\widetilde U)}}{{d\widetilde U}}\, = \,\,\frac{(1 - {\varepsilon}){n_F}}{V}\int_{\frac{V}{{{\widetilde U}(1 - {\varepsilon})}}}^\infty  {q(x)} \,dx\, \le \,\,\frac{(1 - {\varepsilon}){n_F}}{V}\,,\,\,\,\,\,\forall {\overline g _{DBP}}(\widetilde U) \ge 0.\]
As a result, we can construct a lower bound of $\overline g_{DBP}$ by the following ODE:
\begin{equation}\label{eq8}\frac{{d{{\overline g }_{low}}(\widetilde U)}}{{d\widetilde U}} = \left\{ {\begin{array}{*{20}{c}}
   {\frac{ (1 - {\varepsilon}){n_F}}{V},\,\,\,{\rm{if}}\,\,\,\widetilde U\, \in \,\,{\Theta _1}\,}  \\
   {\,\,\,\,\,\,\,\,0\,\,\,\,\,\,\,\,\,\,\,\,\,\,\,\,\,\,,\,\,\,\,\,{\rm{otherwise}}}  \\
\end{array}\,} \right.
\end{equation}
\begin{equation}\label{eq9}{\overline g_{low}}({\widetilde U_0}) = {\overline g_{DBP}}({\widetilde U_0})\end{equation}
where $\Theta_1$ is $\,\left\{ {\widetilde U\left| {{{\overline g _{low}}}(\widetilde U) \ge 0} \right.} \right\}$. ${\overline g _{low}}(\widetilde U)$ always has steeper slope compared with $\overline g_{DBP}(\widetilde U)$ and together with the monotonic increasing property of $\overline g _{DBP}(\widetilde U)$, we could establish the lower bound by solving (\ref{eq8}) and (\ref{eq9}).
\end{proof}

{\red Secondly, based on Lemma \ref{lem_approx}, we shall derive an upper bound on $\overline {J^U}^*(\widetilde U,t)$ by solving the PDE in Lemma \ref{lem_hjb}.} Specifically, the queue dynamics $\widetilde U(t)$ satisfies the ODE in the VCTS in (\ref{eq_ode}):
\begin{equation}\label{eq_odeapx}
\frac{d\widetilde U}{dt} = -\overline{r}_{DBP}(\widetilde U)(1-\varepsilon) \leq -\overline{r}_{low}(\widetilde U)(1-\varepsilon).
\end{equation}
Hence, the upper bound queue trajectory is given by:
\begin{equation}\label{eq_uup}
\widetilde U(t)\,\, \le {\widetilde U_{up}}(t) = {\mkern 1mu} \exp \left[ { - \beta {\kern 1pt} {\kern 1pt}  + {\kern 1pt} {\rm{E}}{{\rm{i}}^{( - 1)}}\left[ {{\rm{Ei}}\left( {\log ({{\widetilde U}_0}) + \beta } \right) - (1 - {\varepsilon}){n_F}{e^{\beta}}t} \right]} \right]\, = \,{\mkern 1mu} y(t;\,\,\beta ),
\end{equation}
Based on (\ref{def_JU}), we have
\[{\overline {{J^U}} ^*}(\widetilde U,\,\,0) \approx \mathbb E\left[ {\int_0^T {\widetilde U(t)dt} \left| {{{\widetilde U}_0}} \right.} \right]\,\, \le \,\int_0^T {{{\widetilde U}_{up}}} (t)dt\,\]
which yields the upper bound of ${\overline {{J^U}} ^*}(\widetilde U,\,\,0)$ in (\ref{eq_JUup}).

{\red Finally, we shall derive a lower bound for $\overline {J^g}^*(\widetilde U,0)$.} Using $\overline r_{up}(\widetilde U(t))$, we can construct a lower bound trajectory by solving $\frac{d\widetilde U_{low}}{dt} = -\overline r_{up}(\widetilde U_{low})(1-\varepsilon)$,
and the solution is given by:
\begin{equation}\label{eq_ulow}
\widetilde U(t)\,\, \ge {\widetilde U_{low}}(t) = {\mkern 1mu} \exp \left[ { - {\beta '}{\kern 1pt} {\kern 1pt}  + {\kern 1pt} {\rm{E}}{{\rm{i}}^{( - 1)}}\left[ {{\rm{Ei}}\left( {\log ({{\widetilde U}_0}) + {\beta '}} \right) -  (1 - {\varepsilon}){n_F}{e^{{\beta '}}}t} \right]} \right]\, = \,{\mkern 1mu} y(t;\,\,\beta '),
\end{equation}

At any time $t \in[0,T]$, we have
\begin{equation}\label{eq_glows}
{\overline g _{low}}({\widetilde U_{low}}(t))\,\,\, \le \,\,{\overline g _{low}}(\widetilde U(t))\,\, \le \,\,{\overline g _{DBP}}(\widetilde U(t)).
\end{equation}
Hence,
\[{\overline {{J^g}} ^*}(\widetilde U,\,\,0) = \int_0^T {{{\overline g }_{DBP}}(\widetilde U(t))\,\,} dt\,\, \ge \,\int_0^T {{{\left[ {\frac{{{{\widetilde U}_{low}}(t){n_F}(1 - {\varepsilon})}}{V} + \,{{\overline g }_{DBP}}({{\widetilde U}_0}) - \frac{{\widetilde U_0 {n_F}(1 - {\varepsilon})}}{V}} \right]}^ + }dt} \]
which yields the lower bound of per-period average energy consumption in (\ref{eq_Jglow}).

\section*{Appendix D: Proof of Lemma \ref{lem_left}}\label{app_left}
{\red Using the theory developed for the continuous time model in Section IV-A, it is enough to show that $L_m$ for the VCTS is bounded by $L^*$ for all $m$. First of all, we shall show that $L^*$ is unique.}

{\red Specifically, the leftover bits at the end of the $m$-th period $L_m$ is:
\begin{equation}\label{eq17}
\underline {{l_m}}  \le {L_m} \le \overline {{l_m}}
\end{equation}
where $\overline {{l_m}}$ and $\underline {{l_m}}$ are the leftover bits of $m$-th period of approximated queue trajectory $\widetilde U_{up}(t)$ and  $\widetilde U_{low}(t)$ in (\ref{eq_uup}) and (\ref{eq_ulow}), respectively. Recall that $\widetilde U_{low}(t)\le\widetilde U(t) \le\widetilde U_{up}(t)$. Since the unfinished work at the start epoch of $m$-th period is the summation of arriving bits $B$ and the leftover bits of the previous $(m-1)$-th period, $\overline {{l_m}}$ and $\underline {{l_m}}$ satisfy the followings:
\[\overline {{l_m}}  = \overline f \left( {B + \overline {{l_{m - 1}}} } \right)\,\,\,\,\,\,{\rm{and}}\,\,\,\,\,\,\underline {{l_m}}  = \underline f \left( {B + \underline {{l_{m-1}}} } \right)\]
where $\overline f (x) = \exp \big[ { - \beta + {\rm{E}}{{\rm{i}}^{( - 1)}}[ {{\rm{Ei}}( {\log (x) + \beta} ) - { {n_F}(1 - \varepsilon )}{e^{\beta}}T} ]} \big]$, and $
\underline f (x) = \text{exp} \big[  - {\beta '} + {\rm{E}}{{\rm{i}}^{( - 1)}}[ {\rm{Ei}}( \log (x)+ {\beta '} ) - {{n_F}(1 - \varepsilon )} {e^{{\beta '}}}T ] \big]$.
Both $\overline f (x)$ and $\underline f (x)$ are the increasing functions of $x$ since the exponential integral $\rm{Ei}$ is increasing function.
Note that the slope of $\overline f(x)$ is given by:
\[\frac{{d\overline f(x)}}{{dx}}\, = \,\frac{{\,{\rm{E}}{{\rm{i}}^{( - 1)}}[{\rm{Ei}}\left( {\log (x) + \beta } \right) -  (1 - {\varepsilon}){n_F}{e^{\beta}}T]}}{{\log (x) + \beta  }}\, < 1\,,\,\,\,\,\,{\rm{for}}\,\,{\rm{all}}\,\,x>0.\]
Therefore, there exists a unique crossing point between $y=x$ and $y=\overline f(B+x)$ and this proved the existence and uniqueness of $L^*$. }

{\red Finally, we shall try to prove that $\sup_{m\in\mathbb R^+}L_m$  is bounded by $L^*$.} We first claim that $\overline{l_m} \leq L^*$ for all $m$. From (\ref{eq_L}) and $\overline l_m  = \overline f \left( {B + \overline l_{m - 1} } \right)\,$, we have:
\begin{equation}\label{eq14}
{\rm{Ei}}\left( {\log ({L^*}) + \beta} \right) = {\rm{Ei}}\left( {\log (B + {L^*}) + \beta} \right) -  (1 - {\varepsilon}){n_F}{e^{\beta}}T
\end{equation}
\begin{equation}\label{eq15}
{\rm{Ei}}\left( {\log ({\overline l_m}) + \beta} \right) = {\rm{Ei}}\left( {\log (B + {\overline l_{m - 1}}) + \beta} \right) - (1 - {\varepsilon}){n_F}{e^{\beta}}T
\end{equation}
Subtracting (\ref{eq15}) from (\ref{eq14}), we have
\begin{equation}\label{eq16}
\int_{\log ({\overline l_m}) + \beta  }^{\log ({L^*}) + \beta  } {\frac{{{e^x}}}{x}\,} dx\, = \int_{\log (B + {\overline l_{m - 1}}) + \beta  }^{\log (B + {L^*}) + \beta  } {\frac{{{e^x}}}{x}\,} dx.\,\
\end{equation}
Note that $e^x/x$ is positive for $x>0$, and $\log(\overline l_m)+\beta>0$ due to $\overline f(x)>e^{-\beta}$. If $\{\overline l_m\}$ is not bounded by $L^*$, there exists $m'$ such that $\overline l_{m'-1}\le L^* <\overline l_{m'}$ and $B+\overline l_{m'-1}\le B+L^* <B+\overline l_{m'}$. This $m'$ makes the RHS of (\ref{eq16}) positive while the LHS of (\ref{eq16}) becomes negative which means a contradiction. Thus, $\{\overline l_m\}$ is bounded by $L^*$. As a result, $\sup_{m\in\mathbb R^+}L_m$ is bounded by $L^*$.


\section*{Appendix E: Proof of Theorem \ref{the_dpbound}}\label{app_dpbound}
{\red In the proof, we shall use Little's law \cite{Ross:2003} to derive the average delay and power consumption for the real discrete time system.}

Specifically, under the stationary DBP policy, the system state $\chi(k)$ evolves as an ergodic Markov chain and hence, there exists a steady state distribution $\pi_{\chi}$ such as ${\pi _\chi }\left( {{\chi _0}} \right)\,\, = \,\,{\lim _{k \to \infty }}\Pr \left[ {\chi (k) = {\chi _0}} \right]\,$. Using Little's law and the characteristic of ergodic chain, the average end-to-end delay under the stationary DBP policy $\Omega^*$ is given by:
\[\overline d \left( {{\Omega ^*}} \right) = \frac{1}{{\overline B }}{E_{{\pi _\chi }}}\left[ {U(k)} \right]\, = \frac{1}{{\overline B }}\mathop {\lim }\limits_{K \to \infty } \sum\nolimits_{k = 0}^{K - 1} {\frac{1}{K}\mathbb E\left[ {U(k)\left| {U(0)} \right.} \right]} \,\]
where the ${E_{{\pi _\chi }}}$ is the expectation w.r.t. the steady state distribution of $U(k)$. Using ergodic theory, we have:
\[\overline d \left( {{\Omega ^*}} \right) = \,\frac{1}{{\overline B \,T}}\mathop {\lim }\limits_{M \to \infty } \frac{1}{M}\sum\nolimits_{m = 0}^{M - 1} {{{\overline {{j^U}} }^*}(U_0^m)} \]
where $U_0^m$ is the initial queue length of the $m$-th period. {\red Note that the unit of $\overline B \,T$ is bits, and the unit of ${{\overline {{j^U}} }^*}$ is bits$\times$seconds, and hence the unit of $\overline d $ is seconds.} As shown in Lemma \ref{lem_left}, the leftover $L_m$ at the end of the $m$-th arrival period is bounded by $L^*$. Hence, $U_0^m$ is upper bounded by $B+L^*$. Using the relationship between the unfinished works in discrete time and continues time in Theorem \ref{the_1}, the discrete time average delay given by:
\[\overline d \left( {{\Omega ^*}} \right) \le \,\,\,\frac{1}{{\overline B T}}{\overline {{j^U}} ^*}(B + {L^*}) = \,\frac{1}{{\overline B T}}{\overline {{J^U}} ^*}(B + {L^*},0)\,\, + \,\mathcal O\left( {{\Delta t}} \right).\]

Similarly, the average power consumption $\overline g ({\Omega ^*})$ is given by:
\[\overline g \left( {{\Omega ^*}} \right) = \frac{1}{T}\mathop {\lim }\limits_{M \to \infty } \frac{1}{M}\sum\nolimits_{m = 0}^{M - 1} {{{\overline {{j^g}} }^*}(U_0^m)} \,\,\, \ge \,\,\frac{1}{T}\,\,{\overline {{j^g}} ^*}(B) = \frac{1}{T}{\overline {{J^g}} ^*}(B,\,\,0)\,\, + \,\mathcal O\left( {{\Delta t}} \right).\]
{\red Note that the unit of $T$ is seconds, and the unit of ${{\overline {{j^g}} }^*}$ is Watt$\times$seconds, and hence the unit of $\overline g $ is Watt.}

\section*{Appendix F: Proof of Corollary \ref{cor_order}}\label{app_order}
{\red From the Theorem \ref{the_dpbound}, we only need to obtain ${\overline {{j^U}} ^*}(B + {L^*})$ and ${\overline {{J^g}} ^*}(B,0)$ for the continuous time model by studying the continuous queue trajectory.}

{\red First of all, we show the asymptotic behavior of $L^*$ as $V$ goes to 0.} Let $x$ to be $\beta$ for simplifying  notation. From (\ref{eq_L}), we have: ${L^*}\, = \,{e^{ - x + {\rm{E}}{{\rm{i}}^{( - 1)}}\left[ {{\rm{Ei}}\left( {\log (B + {L^*}) + x} \right) -  n_F(1-\varepsilon){e^x}} \right]}}$. As $V$ goes to 0, $\beta$ and $x$ increase since $\beta=\Theta(\log(1/V))$. For $x>0$, $e^x$ increases faster than $\mathrm{Ei}(x)$ because
\[{\rm{Ei}}(x)\, = \,\,\log x\,\, + \,\gamma \,\, + \,\sum\nolimits_{k = 1}^\infty  {\frac{{{x^k}}}{{k\,k!}}} \, < \,\sum\nolimits_{k = 1}^\infty  {\frac{{{x^k}}}{{k!}}} \, = {e^x}.\]
Hence, as $V$ approaches to 0, we have $L^* = \Theta\left( e^{-\beta}\right) = \Theta\Big(\frac{V}{{\mathbb{E}[ f(\epsilon,\sigma_e^2,\mathbf{\hat{H}})]}}\Big)$

{\red Secondly, we shall obtain the asymptotic area of buffer trajectory. Specifically, we use Fig. \ref{fig_traj} to illustrate the proof.} The upper bound of queue delay can be derived from the area of $\widetilde U_{up}(t)$ in (\ref{eq_uup}). As shown in (\ref{eq_odeapx}), $d\widetilde U_{up}/dt \propto \log \widetilde U_{up}(t)$, which is decreasing in $t$. Hence, $\widetilde U_{up}(t)$ is a convex function in $t$ and we could upper bound ${\overline {{J^U}} ^*}(B+L^* ,0)$ by the summation of the triangle area (A) and the rectangle area (B) as illustrated in Fig. \ref{fig_traj}(a):
\begin{equation}\label{eq18}
{\overline {{J^U}} ^*}(B+L^* ,0) \le \,\frac{1}{2}(B-L_{\Delta})\,{t_d}\,\, + \,T(L^*+L_\Delta).
\end{equation}
where $L_\Delta=\Theta(e^{-\beta})$ and $t_d$ is the time when $\widetilde U_{up}(t) = L^*+L_\Delta$ which is given by:
\[{t_d}\,\, = \,\frac{{{e^{ - \beta }}}}{{ (1 - {\varepsilon}){n_F}}}\left\{ {{\rm{Ei}}\left( {\log (B + {L^*}) + \beta} \right) - {\rm{Ei}}\left( {\log ({L^*} + {L_\Delta }) + \beta} \right)} \right\} = \Theta \left( {\frac{B}{{\log ({B }\mathbb{E}[ f(\epsilon,\sigma_e^2,\mathbf{\hat{H}})]/V)}}} \right)\]
From (\ref{eq18}), the upper bound of average delay $\overline d(\Omega^*)$ is given by (\ref{eq_dorde}).

\begin{figure}[t!]
\centering
\includegraphics[width=.45\columnwidth]{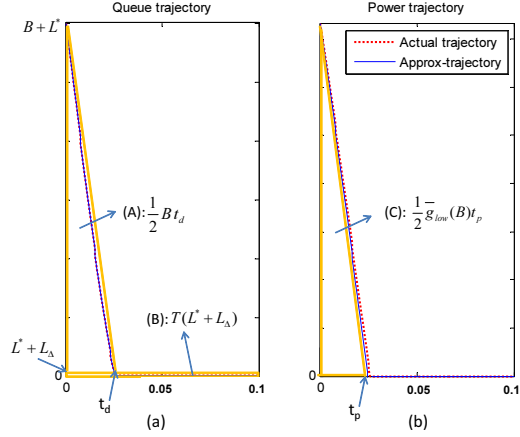}
\caption{Asymptotic upper bound of the area below the queue and power trajectory.} \label{fig_traj}
\end{figure}

{\red Then, we focus on deriving ${\overline {{J^g}} ^*}(B,0)$. Specifically, we use Fig. \ref{fig_traj} to illustrate the proof. We elaborate the asymptotic expression using $\overline g_{low}(\widetilde U_{low}(t))$ where $\widetilde U_{low}(t)$ is the lower bound trajectory derived using $\overline r_{up}$ in (\ref{eq_ulow}). According to (\ref{eq_glows}), $\overline g_{low}$ is a lower bound of the actual power trajectory and it is a convex function of $t$. As a result, the lower bound of the transmission energy (area of $\overline g_{low}$) is given by the triangle (C) of the tangent line of the $\overline g_{low}(B)$ as illustrated in Fig. \ref{fig_traj}(b). Let $t_p$ to be the time when the tangent line touches zero which is given by
\[{t_p} = \frac{{{{\overline g }_{low}}(B)}}{{{{\left. { - \frac{{d{{\overline g }_{low}}(\widetilde U(t))}}{{dt}}} \right|}_{\widetilde U = B}}}} = \frac{{\mathbb E\left[ {{{\left[ {\frac{{{B }{n_F}(1 - {\varepsilon})}}{V} - \frac{{{n_F}}}{{f(\epsilon,\sigma_e^2,\mathbf{\hat{H}})}}} \right]}^ + } + {P_{cct}}} \right]}}{{{{\left. {\frac{1}{V}{{\overline r }^*}(\widetilde U)} \right|}_{\widetilde U = B}}}} = \Theta \left( {\frac{B}{{ \log ({B }\mathbb{E}[f(\epsilon,\sigma_e^2,\mathbf{\hat{H}})]/V)}}} \right).\]
Note that $t_p$ means the transmission time for the burst transmission. On the other hand, we get the same order of $t_p$ in the case of that $\overline g_{low}$ is a concave function \footnote{The $t_p$ of the concave case is obtained from $\overline g_{low}(\widetilde U_{low}(t_p))=0$.}. Hence, the asymptotic lower bound for the total energy consumption in one period is given by:
\[\overline {{J^g}}^*(B,\,\,0)  \ge \,{\mkern 1mu} \,\frac{1}{2}{\overline g _{low}}(B){\mkern 1mu} {t_p}\,\, \approx \,\,\mathbb E\left[ \left( {\frac{{{B }{n_F}(1 - {\varepsilon})}}{V} - \frac{{{n_F}}}{{f(\epsilon,\sigma_e^2,\mathbf{\hat{H}})}^ + }} \right) \right]{t_p} + {P_{cct}}{t_p}{\mkern 1mu} \]
which yields the lower bound of average power consumption in (\ref{eq_gorde}).}


\section*{Appendix G: Proof of Theorem \ref{the_rbound}}\label{app_rbound}
{\red To derive the bound for the random bits arrival, we shall first derive the upper bound of the steady state leftover queue length of a period (where the lower bound is 0 obviously.)}

{\red First of all, from the results of deterministic arrivals in Theorem \ref{the_dpbound},} the average end-to-end delay $\widetilde d(\Omega^*)$ and the average power consumption $\widetilde g(\Omega^*)$ for random arrivals are given by:
\[\widetilde d\left( {{\Omega ^*}} \right) = \,\,\frac{1}{{\overline {\mathcal B} \,T}}\mathbb E\left[ {{{\overline {{J^U}} }^*}(\mathcal B + \mathcal L,0)} \right]\,\, + \,\mathcal O\left( {\Delta t}\right)\]
\[\widetilde g\left( {{\Omega ^*}} \right) = \,\,\frac{1}{{\,T}}\mathbb E\left[ {{{\overline {{J^{g}}} }^*}(\mathcal B + \mathcal L,\,\,0)} \right]\,\, + \,\mathcal O\left( {\Delta t}\right)\]
where $\mathcal L$ is the steady state leftover queue length of a period for random arrivals. {\red From Lemma \ref{lem_left}, it is obvious that $\mathrm L^*_{max}$ is the upper bound of $\mathcal L$.} Then we have the upper bound and lower bound of average per-period unfinished work and energy consumption in VCTS:
$\mathbb E\left[ {{{\overline {{J^U}} }^*}(\mathcal B + \mathcal L,\,\,0)} \right] \le {\mkern 1mu} {\mkern 1mu} {\mkern 1mu} \mathbb E{\mkern 1mu} \left[ {{{\overline {{J^U}} }^*}(\mathcal B + {\rm{L}}_{\max }^*,0)} \right]$ and $\mathbb E\left[ {{{\overline {{J^g}} }^*}(\mathcal B + \mathcal L,\,0)} \right] \ge {\mkern 1mu} {\mkern 1mu} {\mkern 1mu} \mathbb E{\mkern 1mu} \left[ {{{\overline {{J^g}} }^*}(\mathcal B\,,\,\,0)} \right]$, 
which yield (\ref{eq_dupr}) and (\ref{eq_plowr}), respectively.


    \bibliographystyle{unsrt}    


\begin{thebibliography}{10}

\bibitem{meshkati_energy-efficient_2007}
F.~Meshkati, {H.V.} Poor, and {S.C.} Schwartz.
\newblock {Energy-Efficient} resource allocation in wireless networks.
\newblock {\em {IEEE} Signal Processing Magazine}, 24(3):58--68, 2007.

\bibitem{mingbo_xiao_utility-based_2003}
Mingbo Xiao, {N.B.} Shroff, and {E.K.P.} Chong.
\newblock A utility-based power-control scheme in wireless cellular systems.
\newblock {\em {IEEE/ACM} Transactions on Networking}, 11(2):210--221, 2003.

\bibitem{rey_robust_2005}
F.~Rey, M.~Lamarca, and G.~Vazquez.
\newblock Robust power allocation algorithms for {MIMO} {OFDM} systems with
  imperfect {CSI}.
\newblock {\em {IEEE} Transactions on Signal Processing}, 53(3):1070--1085,
  2005.

\bibitem{yingwei_yao_rate-maximizing_2005}
Yingwei Yao and {G.B.} Giannakis.
\newblock Rate-maximizing power allocation in {OFDM} based on partial channel
  knowledge.
\newblock {\em {IEEE} Transactions on Wireless Communications},
  4(3):1073--1083, 2005.

\bibitem{yoo_capacity_2006}
T.~Yoo and A.~Goldsmith.
\newblock Capacity and power allocation for fading {MIMO} channels with channel
  estimation error.
\newblock {\em {IEEE} Transactions on Information Theory}, 52(5):2203--2214,
  2006.

\bibitem{love_overview_2008}
{D.J.} Love, {R.W.} Heath, {V.K.N.} Lau, D.~Gesbert, {B.D.} Rao, and
  M.~Andrews.
\newblock An overview of limited feedback in wireless communication systems.
\newblock {\em {IEEE} Journal on Selected Areas in Communications},
  26(8):1341--1365, 2008.

\bibitem{bertsekas_dynamic_1987}
D.~P Bertsekas.
\newblock {\em Dynamic programming: deterministic and stochastic models}.
\newblock {Prentice-Hall,} Inc. Upper Saddle River, {NJ,} {USA}, 1987.

\bibitem{tassiulas_stability_1992}
L.~Tassiulas and A.~Ephremides.
\newblock Stability properties of constrained queueing systems and scheduling
  policies for maximum throughput in multihop radio networks.
\newblock {\em {IEEE} Transactions on Automatic Control}, 37(12):1936--1948,
  1992.

\bibitem{georgiadis_resource_2006}
L.~Georgiadis, M.~J Neely, and L.~Tassiulas.
\newblock {\em Resource allocation and cross layer control in wireless
  networks}.
\newblock Now Pub, 2006.

\bibitem{neely_energy_2006}
M.~J. Neely.
\newblock Energy optimal control for time-varying wireless networks.
\newblock {\em {IEEE} Transactions on Information Theory}, 52(7):2915–2934,
  2006.

\bibitem{berry_optimal_2006}
Randall Berry.
\newblock Optimal power-delay trade-offs in fading channels: Small delay
  asymptotics.
\newblock {\em Information Theory and {Applications—Inaugural} Workshop},
  2006.

\bibitem{qiao_miser:_2003}
Daji Qiao, Sunghyun Choi, Amit Jain, and Kang~G Shin.
\newblock {MiSer:} an optimal low-energy transmission strategy for {IEEE}
  802.11a/h.
\newblock In {\em Proceedings of the 9th annual international conference on
  Mobile computing and networking}, {MobiCom} '03, page 161–175, New York,
  {NY,} {USA}, 2003. {ACM}.
\newblock {ACM} {ID:} 939003.

\bibitem{zafer_minimum_2009}
M.~Zafer and E.~Modiano.
\newblock Minimum energy transmission over a wireless channel with deadline and
  power constraints.
\newblock {\em {IEEE} Transactions on Automatic Control}, 54(12):2841--2852,
  2009.

\bibitem{kushner_numerical_2001}
Harold~Joseph Kushner and Paul Dupuis.
\newblock {\em Numerical methods for stochastic control problems in continuous
  time}.
\newblock Springer, 2001.

\bibitem{bertsekas_dynamic_2007}
{D.P.} Bertsekas.
\newblock Dynamic programming and optimal control.
\newblock {\em Athena Scientific}, 2007.

\bibitem{zafer_calculus_2005}
M.~A Zafer and E.~Modiano.
\newblock A calculus approach to minimum energy transmission policies with
  quality of service guarantees.
\newblock In {\em {IEEE} {INFOCOM}}, volume~1, page 548, 2005.

\bibitem{marzetta_fast_2006}
{T.L.} Marzetta and {B.M.} Hochwald.
\newblock Fast transfer of channel state information in wireless systems.
\newblock {\em {IEEE} Transactions on Signal Processing}, 54(4):1268--1278,
  2006.

\bibitem{csi_error_it}
T.~Yoo and A.~Goldsmith.
\newblock {Capacity and power allocation for fading MIMO channels with channel
  estimation error}.
\newblock {\em {IEEE} Transactions on Information Theory}, 52(5):2203--2214,
  2006.

\bibitem{Ramyadelayfb2009}
T.~{R. Ramya} and S.~Bhashyam.
\newblock {Eigen-Beamforming with Delayed Feedback and Channel Prediction}.
\newblock In {\em Proc. ISIT}, June-July 2009.

\bibitem{BertsekasNeuro:1996}
D.~{P. Bertsekas} and J.~{N. Tsitsiklis}.
\newblock {\em Neuro-Dynamic Programming}.
\newblock Athena Scientifics, 1st edition, 1996.

\bibitem{Borkarbook:2008}
V.~{S. Borkar}.
\newblock {\em Stochastic Approximation: A Dynamical Systems Viewpoints}.
\newblock Cambridge University Press, United Kingdom, 1st edition, 2008.

\bibitem{lau_asymptotic_2008}
V.~Lau, Wing~Kwan Ng, and {D.S.W.} Hui.
\newblock Asymptotic tradeoff between cross-layer goodput gain and outage
  diversity in {OFDMA} systems with slow fading and delayed {CSIT}.
\newblock {\em {IEEE} Transactions on Wireless Communications},
  7(7):2732--2739, 2008.

\bibitem{wimax:2008}
IEEE 802.16m evaluation~methodology document.
\newblock IEEE 802.16m-08/004r4.

\bibitem{Proakis:2001}
J.~G. Proakis.
\newblock {\em {Digital Communications}}.
\newblock New York: McGraw-Hill, 4th ed.,, 2001.

\bibitem{berry_communication_2002}
R.~A. Berry and R.~G. Gallager.
\newblock Communication over fading channels with delay constraints.
\newblock {\em {IEEE} Transactions on Information Theory}, 48(5):1135–1149,
  2002.

\bibitem{shuguang_cui_energy-constrained_2005}
Shuguang Cui, {A.J.} Goldsmith, and A.~Bahai.
\newblock Energy-constrained modulation optimization.
\newblock {\em {IEEE} Transactions on Wireless Communications},
  4(5):2349--2360, 2005.

\bibitem{andrews_scheduling_2004}
M.~Andrews, K.~Kumaran, K.~Ramanan, A.~Stolyar, R.~Vijayakumar, and P.~Whiting.
\newblock Scheduling in a queuing system with asynchronously varying service
  rates.
\newblock {\em Probability in the Engineering and Informational Sciences},
  18(02):191–217, 2004.

\bibitem{eunkyung_kim_efficient_2007}
Eunkyung Kim, Juhee Kim, and Kyung~Soo Kim.
\newblock An efficient resource allocation for {TCP} services in {IEEE} 802.16
  wireless {MANs}.
\newblock In {\em Vehicular Technology Conference, 2007. {VTC-2007} Fall. 2007
  {IEEE} 66th}, pages 1513--1517, 2007.

\bibitem{narbutt_gauging_2006}
Miroslaw Narbutt and Mark Davis.
\newblock Gauging {VoIP} call quality from 802.11 {WLAN} resource usage.
\newblock In {\em A World of Wireless, Mobile and Multimedia Networks,
  International Symposium on}, volume~0, pages 315--324, Los Alamitos, {CA,}
  {USA}, 2006. {IEEE} Computer Society.

\bibitem{Ross:2003}
S.~M. Ross.
\newblock {\em {Introduction to probability models}}.
\newblock 8th edition, Amsterdam : Academic Press, 2003.

\end{thebibliography}

\end{document}